\newcommand{\arxiv}[2]{#1} 
	\theoremstyle{plain}
	\newtheorem{theorem}{Theorem}[section]
	\newtheorem{lemma}[theorem]{Lemma}
	\newtheorem{corollary}[theorem]{Corollary}
	\newtheorem{example}[theorem]{Example}
	\theoremstyle{definition}
	\newtheorem*{definition}{Definition}
	\theoremstyle{remark}
	\setlist[enumerate]{label*=(\arabic*),
		topsep=0.25em,
		parsep=0.1em,
		itemsep=0.1em
		}
	\setlist[itemize]{
		topsep=0.25em,
		parsep=0.1em,
		itemsep=0.1em
	}
	\newtheoremstyle{named}{}{}{\itshape}{}{\bfseries}{.}{.5em}{\thmnote{#3}}
	\theoremstyle{named}
	\newtheorem*{namedtheorem}{Theorem}
	\newcommand{\flexdef}[2]{
		\begin{namedtheorem}[#1]
			#2
		\end{namedtheorem}
	}
	\newcommand{\flexdef}[2]{
		\begin{definition}
			#2
		\end{definition}
	}
\newcommand{\maximum}{maximum\xspace}
\newcommand{\textdefine}[1]{\textit{#1}}
\newcommand{\comment}[1]{}
\newcommand{\name}[1]{\textsc{#1}\xspace}
\newcommand{\problemm}[1]{\textsc{#1}\xspace}
\newcommand{\problemTSS}{\problemm{TSS}}
\newcommand{\problemTSSLong}{\problemm{Target Set Selection}}
\newcommand{\thr}{\myfunction{thr}}
\newcommand{\fpt}{\name{FPT}}
\newcommand{\np}{\name{NP}}
\newcommand{\wclass}[1]{\name{W[\( #1 \)]}}
\newcommand{\apx}{\name{APX}}
\newcommand{\lab}{\myfunction{lab}} 
\newcommand{\tmax}{t_{\max}}
\newcommand{\tamount}[1]{t^{#1}}
\newcommand{\add}[1]{\eta_{#1}}
\newcommand{\addd}{\add{\alpha,\beta}}
\newcommand{\addda}{\add{\alpha' \! , \beta'}}
\newcommand{\GG}{G}
\newcommand{\myfunction}[1]{\mbox{\normalfont{\textsf{#1}}}}
\newcommand{\vertify}{\operatorname{\pi^{\text{-}1}}} 
\newcommand{\getorder}{\operatorname{\pi}} 
\newcommand{\labset}{\myfunction{labels}} 
\newcommand{\afo}{\myfunction{afo}} 
\newcommand{\tmaxx}[1]{\tmax^{#1}}
\newcommand{\rename}[1]{\rho_{#1}}
\newcommand{\renamee}{\rename{\alpha \to \beta}}
\newcommand{\nullf}{\textbf{0}}
\newcommand{\inc}{\ast}
\newcommand{\adde}{\myfunction{e}_{\getorder}} 
\newcommand{\deact}[1]{{#1}^{\getorder}} 
\newcommand{\deactt}{\deact{A}}
\newcommand{\deacttt}[2]{{#1}^{#2}}
\newcommand{\afoa}[1]{\afo(\deactt(#1))}
\newcommand{\incoming}[2]{{#1}^{<}_{#2}}
\newcommand{\condense}{\myfunction{condense}}
\newcommand{\nexte}[1]{\overrightarrow{\getorder_{#1}}}
\newcommand{\Nbh}[1]{\myfunction{N}_{#1}}
\newcommand{\addvalue}[1]{\myfunction{add}({#1})} 
\newcommand{\exampleA}{\big( (\beta,1),\allowbreak (\alpha,1),\allowbreak (\beta,1),\allowbreak (\alpha,2),\allowbreak (\alpha,2),\allowbreak (\gamma,2),\allowbreak (\beta,1),\allowbreak (\gamma,2),\allowbreak (\gamma,2) \big)}
\newcommand{\exampleAf}{\big( (\beta,1),\allowbreak (\alpha,1),\allowbreak (\beta,1),\allowbreak (\alpha,2),\allowbreak (\alpha,2),\allowbreak (\gamma,2),\allowbreak \boldsymbol{(\gamma,2),\allowbreak (\beta,1)},\allowbreak (\gamma,2) \big)}
\newcommand{\setjoin}{\mathcal{S}[f_1 \oplus f_2,(A,\afo)]}
\newcommand{\setlabel}{\mathcal{S}[\renamee f',(A,\afo)]}
\newcommand{\N}{\mathbb{N}}
\newcommand{\backward}{(\( \Leftarrow \))\xspace}
\newcommand{\forward}{(\( \Rightarrow \))\xspace}
\newcommand{\set}[2]{ \{ #1 \; | \; #2 \} }
\newcommand{\bigset}[2]{ \big\{ #1 \; \big| \; #2 \big\} }
\newcommand{\Oh}{\mathcal{O}}
\newcommand{\define}[1]{\textnormal{#1}}  
\newcommand{\problemdefSimple}[3]{
\begin{tabularx}{\textwidth}{ r X p{0.5cm} }
& &
\\ \multicolumn{2}{l}{#1} \\
Input: & #2 \\
Question: & #3
\\ & & 
\end{tabularx}
}
	\newcommand{\proofSketch}{Proof (Sketch)}
	\newcommand{\seeProof}[1]{
		\begin{proof}
			See appendix #1.
		\end{proof}}
	\newcommand{\seeFullProof}[1]{For a full proof see appendix #1}
	\newcommand{\proofSketch}{Sketch}
	\newcommand{\arxivlink}{\href{https://arxiv.org/abs/0706.1234}{https://arxiv.org/abs/0706.1234}}
	\newcommand{\seeProof}[1]{}
	\newcommand{\seeFullProof}[1]{For a full proof see online version}
\begin{document}
	
\arxiv{
	\begin{center}
	{\LARGE{{Target Set Selection Parameterized by\\ Clique-\!Width and Maximum Threshold}}}
	
	\ \\
	
	{\small{Tim A.~Hartmann\\
	Lehrstuhl für Informatik 1,\\ RWTH Aachen University, Germany\\
		hartmann@algo.rwth-aachen.de}}
	\end{center}
}{
	\title{Target Set Selection Parameterized by Clique-\!Width and Maximum Threshold}
	\author{Tim A.~Hartmann}
	\institute{Lehrstuhl für Informatik 1, RWTH Aachen University, Germany\\
	\email{hartmann@algo.rwth-aachen.de}
	}
	\maketitle
}

\pagenumbering{arabic}
\setcounter{page}{1}

\begin{abstract}
	The \problemTSSLong problem takes as an input a graph $G$ and a non-negative integer threshold \( \thr(v) \) for every vertex $v$.
	A vertex $v$ can get active as soon as at least \( \thr(v) \) of its
	neighbors have been activated. The objective is to select a smallest
	possible initial set of vertices, the target set, whose activation eventually leads
	to the activation of all vertices in the graph.
	
	We show that \problemTSSLong is in \fpt when parameterized with	the combined parameters clique-width of the graph and the maximum threshold value.
	This generalizes all previous \fpt-membership results for the parameterization by maximum threshold,
		and thereby solves an open question from the literature.
	We stress that the time complexity of our algorithm is surprisingly well-behaved and grows only single-exponentially in the parameters. 
\end{abstract}
\section{Introduction}

The \problemTSSLong problem (\problemTSS) suits to model irreversible propagation of all sorts of conditions or information in a network.
	This may be for example a word-of-mouth-effect, disease spreading or fault influence in distributed systems \cite{DBLP:journals/snam/NichterleinNUW13}.
The input is an undirected graph \( G \) and a non-negative integer threshold \( \thr(v) \) for every vertex \( v \).
The task is to select a smallest possible set \( S \) of initially active vertices, the target set, whose activation eventually leads to the activation of all vertices in the graph.
A vertex $v$ can become active as soon as at least \( \thr(v) \) of its
	neighbors have been activated.

Our view on the activation of a vertex is that it is \textit{allowed} to become active if enough neighbors are active before,
	in contrast to that it is \textit{obligated} to get active as soon as possible.
We ask for a smallest possible set \( S \), the target set, and a permutation of the vertices \( \getorder \), which is the ordering in which the vertices get active. 
Then, for every non-target set vertex \( v \), to assure its activation we require that at least threshold \( \thr(v) \) many neighbors of \( v \) are ordered before \( v \).
In particular, our permutation may order the target set vertices \( S \) not at the beginning.
This definition is more robust towards re-orderings of the permutation of vertices.
We can re-order the permutation and not have to bother that for example the target set no longer consists of the very first vertices of the ordering.
In the literature the problem is commonly defined via rounds of activations that define sets of active vertices for each round.
Our definition is equivalent while being much more convenient for our techniques.
\problemdefSimple{\problemTSSLong}
	{An undirected graph \( G \), a non-negative threshold for every vertex \( \thr: V(G) \to \N \), and \( k \in \N \).}
	{
	Is there a set of vertics \( S \subseteq V(G) \) of size at most \( k \) and a permutation of the vertices \( \getorder: V(G) \to [|V(G)|] \) such that for every vertex \( v \in V(\GG) \setminus S \) we have 
	\( | \bigset{ u \in \Nbh{G}(v) }{ \getorder(u) < \getorder(v) } | \; \geq \; \thr(v)  \)?
	}
The problem was first introduced by Kempe et al.~\cite{DBLP:conf/kdd/KempeKT03}.
It proves to be computationally extremely difficult.
It is \np-hard even for the restriction to split-graphs of diameter two \cite{DBLP:journals/snam/NichterleinNUW13}.
		Chen showed that minimizing the size of the target set is \apx-hard \cite{DBLP:journals/siamdm/Chen09}. 
		More recently, Bazgan et al.~showed that for every functions \( f \) and \( \rho \) this problem cannot be approximated within a factor of \( \rho(k) \) in \( f(k) \cdot n^{\Oh(1)} \) time \cite{DBLP:journals/computability/BazganCNS14}. 
The parameterized complexity studies focus on the original problem and two variants that limit the allowed thresholds.
	These are \textit{constant thresholds}, where all thresholds are at most a constant \( \tmax \),
	and \textit{majority thresholds}, where a vertex can get active as soon as at least the majority of its neighborhood is active before.
The general \problemTSS is \wclass{1}-hard for each of the parameterization, ``distance to cluster,'' \cite{DBLP:journals/mst/ChopinNNW14} ``distance to forest'' and pathwidth \cite{DBLP:journals/snam/NichterleinNUW13}.
The strongest positive \fpt-membership results for constant thresholds are the parameterization by treewidth \cite{DBLP:journals/disopt/Ben-ZwiHLN11},
	the parameterization by ``distance to cluster'' \cite{DBLP:journals/mst/ChopinNNW14},
	and the parameterization by neighborhood diversity \cite{TSSinDenseGraphClasses2016}.
There are a lot more parameterized complexity results for these three variants of \problemTSS \cite{DBLP:journals/mst/ChopinNNW14,DBLP:journals/snam/NichterleinNUW13}.
%
Further, Cicalese et al.~study a variant of \problemTSS which asks if a set of vertices \( A \) can be activated in a given number of activation rounds \cite{DBLP:journals/tcs/CicaleseCGMV14}.
They give a polynomial time algorithm when the number of activation rounds and the clique-width of the input graph are constant.
Their exponential dependency on the clique-width is unlikely to be improved, as even \problemTSS for one activation round is \( \wclass{1} \)-hard with respect to the treewidth \cite{simple}.
%
%
%
%
For a more extend introduction to the history of the problem as well as other algorithmic aspects and similar models see for example \cite{DBLP:journals/mst/ChopinNNW14,DBLP:journals/snam/NichterleinNUW13}. 

Dvo\v{r}\'{a}k et al.~raised the question of the complexity of the parameterization by the modular-width \cite{DBLP:journals/corr/DvorakKT16}.
The structural graph parameter modular-width was introduced by Gajarsk{\'{y}} et al.~\cite{DBLP:conf/iwpec/GajarskyLO13}.
We give a positive answer by showing \fpt-membership for a more general question.
We consider the clique-width which is upper bounded by the parameters modular-width and treewidth \cite{DBLP:journals/siamcomp/CorneilR05}, and by further common structural parameters for which the parametrized complexity of \problemTSS was open. 
Thereby, we generalize all positive \fpt-memberships results for \problemTSS with constant thresholds.
Further, our result does not rely on the \maximum threshold \( \tmax \) being a constant, but allows that \( \tmax \) is a parameter.
Moreover, the time complexity of our algorithm behaves surprisingly well and grows only single-exponentially in the parameters clique-width and \maximum threshold.

A related result is that \problemTSS is in \fpt when parameterized by treewidth and \maximum threshold, by Ben{-}Zwi et al.~\cite{DBLP:journals/disopt/Ben-ZwiHLN11}.
They use a dynamic program that works along the bags of a computed tree decomposition.
They fix the local ordering in which the vertices of the currently observed bag get active.
Our approach also uses such an recursive approach, while working on a computed \( \ell \)-expression.
Informally, an \( \ell \)-expression is a tree-decomposition in the context of clique-width.
Such an \( \ell \)-expression \( f \) uses three types of recursive operations that work on labeled vertices using at most \( \ell \) different labels.
Analogously to the approach for a tree decomposition, for every subexpression a current state fixes a part of the global ordering of the vertices.

However, the described vertices of a current subexpression is not bounded by our parameters.
Our algorithm has to remember an ordering of a limited number of vertices and further has to address these vertices indirectly.
Crucial for the activation of a vertex is its threshold and neighborhood.
	However, we cannot address the neighborhood even for vertices of currently equal label and threshold since they can have very different neighborhoods as subexpression may reveal.
Consequently, our approach explores the \( \ell \)-expression top down, and fixes an ordering of the important vertices of the up to now described graph.
The up to now encountered operations define a common neighborhood for all vertices of a fixed label.
	This is because for every outer operations, vertices of the same label behave equally.
Thus, our local ordering indirectly references the vertices solely by their label and threshold.

Further, vertices of the same label that occur late enough in a global ordering behave equally.
There is only one type of edge operation of \( \ell \)-expression, namely \( \addd \) adding all edges between vertices of some labels \( \alpha \) and \( \beta \).
There, for a vertex \( v \) of label \( \alpha \) we have to account the contribution to the activation of \( v \) due to vertices of label \( \beta \).
Only the first \( \thr(v) \leq \tmax \) active vertices of label \( \beta \) are important.
If the activation of \( v \) is between the activation of the first \( \tmax \) of label \( \beta \), we fix their relative positioning in our local ordering.
Otherwise, the activation  of \( v \) does not differ from other late vertices of label \( \alpha \).

However, we need to guarantee that a vertex \( v \) of label \( \alpha \) that is not referenced by our local ordering is indeed ordered late enough.
That is, the first \( \tmax \) vertices of label \( \beta \) occur before vertex \( v \).
We denote such a global ordering as nice to the current subexpression \( \addd f' \).
It is possible to modify any valid global ordering to be nice to all subexpressions.
We extend our local ordering to also include the \( (\tmax+1) \)-st vertex of every label.
	Then, whether the underlying global ordering \( \getorder \) is nice, is reflected in our local ordering.
Therefore, we can restrict our algorithm to consider nice global orderings only.

The resulting procedure for our algorithm at each operation of the given \( \ell \)-expression then is as follows.
For a current edge operation \( \addd \),
	for each vertex \( v \) we simply have to adjust the number of neighbors contributing to the activation of \( v \) according to our fixed local ordering.
We remember this contribution as the \textdefine{activation from outside}.
For a current operation that combines two subgraphs, consider the unknown partition of the vertices fixed by the local ordering in either subgraph.
	In that case, the algorithm tries all possibilities.
The approach for the operation that re-labels a label is very similar.
For every subexpression, the number of possible states is single-exponentially bounded by our parameters, which yields to an overall \fpt-runtime.

\newcommand{\lemmaMain}{
	Let \( \tmax, \ell \in \N \).
	There is an algorithm that,
		given a graph \( G \), a threshold for each vertex \( \thr: V(G) \to [0, \tmax] \) and an \( \ell \)-expression \( f \) of \( G \),
		computes the minimal size of a target set in time \( \Oh( \ell^{3 \ell t} \cdot t^{\ell (4t+1) } \cdot |f| ) \), where \( t := \tmax+1 \) and \( |f| \) is the length of \( f \).
	}
\begin{theorem}
\label{lemma:main:introduction}
	\lemmaMain
\end{theorem}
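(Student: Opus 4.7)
The plan is a bottom-up dynamic program on the given \( \ell \)-expression \( f \). For every subexpression \( f' \), describing a labeled graph \( H \), I would enumerate a bounded family of \emph{local states}. A local state fixes, for each of the \( \ell \) labels \( \alpha \), a prefix of the global activation ordering of \( H \)'s vertices that carry label \( \alpha \) — concretely, the first \( t := \tmax+1 \) such vertices together with their thresholds. Since every outer operation treats all vertices of a common label uniformly, these representatives need not be identified individually, but only by their (label, threshold) pair. Additionally, the state records for each represented vertex \( v \) an \emph{activation-from-outside} value \( \afo(v) \), counting the neighbors of \( v \) outside \( H \) that have already been fixed by the state to precede \( v \) in the ordering.

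For correctness I would restrict attention to global orderings that are \emph{nice} to every edge subexpression \( \eta_{\alpha,\beta} f'' \): at such a node, the first \( \tmax \) label-\( \beta \) vertices precede every non-represented label-\( \alpha \) vertex, and symmetrically. Any valid ordering can be reshuffled to satisfy this, so the restriction is safe. Niceness is exactly what lets the state discard everything past position \( t \) in each label, because a late vertex \( v \) of label \( \alpha \) is then guaranteed to receive the full contribution from label \( \beta \), so only the relative placement of \( v \) among the first \( \tmax \) label-\( \beta \) vertices can ever be relevant. The recursion then handles the three operation types as follows. At an edge operation \( \eta_{\alpha,\beta} \), the \( \afo \) values of the represented vertices of labels \( \alpha \) and \( \beta \) are updated locally from the fixed prefixes of the opposite label. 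At a disjoint-union operation \( f_1 \oplus f_2 \), I enumerate all ways of interleaving the two local orderings into a common one, adding their \( \afo \)-values. At a relabeling \( \rho_{\alpha \to \beta} f' \), I merge the \( \alpha \)- and \( \beta \)-prefixes, truncate to the first \( t \) entries, and propagate the \( \afo \)-values of the surviving representatives. Leaves give trivial states. At the root, a state is feasible if every represented vertex \( v \) either satisfies \( \afo(v) \geq \thr(v) \) or is declared to belong to the target set, and the algorithm returns the minimum target set size across all feasible states.

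The main obstacle is proving that the local state together with the niceness constraint faithfully represents all relevant global orderings: formally, that each feasible sequence of local states along \( f \) lifts to at least one nice ordering of \( G \) certifying a target set of the claimed size, and conversely that every nice ordering of \( G \) induces such a sequence. Once this correspondence is established, the runtime bound follows by counting states: a local state is an ordering of at most \( \ell t \) representatives (each tagged by one of \( \ell \) labels and a threshold in \( \{0, \dots, \tmax\} \)) together with an \( \afo \) function into \( \{0, \dots, \tmax\} \), which is single-exponential in \( \ell \) and \( t \); multiplying by the cost of combining states at each operation of \( f \) yields the claimed \( \Oh( \ell^{3 \ell t} \cdot t^{\ell (4t+1)} \cdot |f|) \) bound.
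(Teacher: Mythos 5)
Your proposal follows the same blueprint as the paper: local orderings of the first \( t \) active vertices per label, remembered only as (label, threshold) pairs; an activation-from-outside function; restriction to \emph{nice} orderings; a case analysis over the three operation types; and a state-counting argument for the runtime. But two structural details are off. First, your state stores an \( \afo \)-value only for the \emph{represented} vertices, yet the algorithm must also certify the activation of the unboundedly many vertices beyond position \( t \) in each label; the paper therefore defines \( \afo \) on \( [|A|] \cup \labset(f) \), with one additional value per label shared by all non-represented vertices of that label (this is also where the factor \( t^{\ell} \) in the state count comes from, so your own count does not reproduce the bound you quote). Second, your boundary conditions are inverted: with \( \afo(v) \) defined as the number of neighbors \emph{outside} the current subgraph, the test \( \afo(v) \geq \thr(v) \) belongs at the leaves \( v(\alpha) \), where \( \afo \) has accumulated the contribution of every edge operation on the path to the root; at the root one must instead insist that \( \afo = \nullf \), since \( G \) has no outside. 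As written, your root test would force every represented vertex of positive threshold into the target set.

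More importantly, the statement you defer as ``the main obstacle'' is where essentially all of the work lies, and it is not routine. One direction needs a re-ordering argument showing that any \( k \)-activating global ordering can be modified, processing the edge operations top-down, to be nice to every subexpression while remaining \( k \)-activating (Lemma~\ref{lemma:fix}); the other needs soundness of the state manipulations, in particular that an arbitrary interleaving of the two local orderings at a \( \oplus \)-node lifts to a genuine global ordering of the disjoint union whose condensation realizes that interleaving (Lemma~\ref{lemma:split}), and that the \( \addd \)-update of \( \afo \), computed purely from positions in \( A \), agrees with the true number of new preceding neighbors for \emph{every} vertex, represented or not --- which is exactly the point where niceness is needed and where the cap at \( \tmax \) must be justified by \( \thr(v) \leq \tmax \) (Lemma~\ref{lemma:key}). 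Without these arguments the proposal is an algorithm description with a correct state space, not a proof of the theorem.
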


An easy upper bound for the length of the \( \ell \)-expression \( f \) is \( |V(G)|^2 \). 
Further, one can obtain a minimum target set, and not only its size, by tracking such sets throughout our dynamic program.

Oum gave an algorithm that either outputs an \( (8^\ell-1 ) \)-expression of graph \( G \) or confirms that the clique-width of \( G \) is larger than \( \ell \), and that runs in time \( \Oh( g(\ell) \cdot |V(G)|^3 ) \), where \( g(\ell) \) only depends on the clique-width \( \ell \) \cite{DBLP:journals/talg/Oum08}.
Combined with the algorithm of Theorem \ref{lemma:main:introduction} it follows that \problemTSS parameterized by the clique-width and the \maximum threshold is in \fpt.

\begin{corollary}
	\problemTSSLong is in \fpt with respect to the combined parameters clique-width of the given graph and the \maximum threshold.
\end{corollary}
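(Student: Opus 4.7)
The plan is to reduce the corollary to Theorem~\ref{lemma:main:introduction} by supplying the required $\ell$-expression algorithmically, since the theorem's input assumes the expression is given. The only missing ingredient is therefore a parameterized procedure that, from a graph of bounded clique-width, produces a suitable expression on which the dynamic program can be run.

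First, I would invoke Oum's algorithm \cite{DBLP:journals/talg/Oum08}: given $G$ and the clique-width bound $\ell$, it runs in time $\Oh(g(\ell) \cdot |V(G)|^3)$ and either outputs an $\ell'$-expression $f$ of $G$ with $\ell' = 8^\ell-1$, or else certifies that the clique-width of $G$ exceeds $\ell$. In the latter case the input violates our parameter promise and can be rejected or reported directly, since clique-width is part of the parameter. In the former case we have the expression in hand.

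Second, I would feed $G$, the threshold function $\thr$, and the computed expression $f$ into the algorithm of Theorem~\ref{lemma:main:introduction} with parameter $\ell'$. Setting $t := \tmax + 1$, the resulting running time is $\Oh(\ell'^{3 \ell' t} \cdot t^{\ell'(4t+1)} \cdot |f|)$. Combining this with Oum's preprocessing and the standard polynomial bound $|f| = \Oh(|V(G)|^2)$ on the length of the expression produced by Oum's construction, the total time is bounded by $h(\ell,\tmax) \cdot |V(G)|^{\Oh(1)}$ for some computable function $h$ depending only on the combined parameter. This is exactly the form required for \fpt-membership.

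There is essentially no obstacle here: the corollary is a direct combination of an existing approximation algorithm for clique-width with the main theorem. The only subtlety worth double-checking is that the blow-up from clique-width $\ell$ to the number of labels $\ell' = 8^\ell - 1$ does not threaten \fpt-membership, which is fine because $\ell'$ still depends only on $\ell$, and hence the entire factor $\ell'^{3\ell' t} \cdot t^{\ell'(4t+1)}$ is a function of the combined parameter $(\ell, \tmax)$.
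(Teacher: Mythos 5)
Your proposal is correct and matches the paper's own argument: the paper likewise combines Oum's approximation algorithm, which produces an $(8^\ell-1)$-expression or certifies that the clique-width exceeds $\ell$, with the dynamic program of Theorem~\ref{lemma:main:introduction}, observing that the resulting running time remains a function of the combined parameter times a polynomial in the input size. Your extra remark that the blow-up to $8^\ell-1$ labels is harmless is exactly the (implicit) point the paper relies on.
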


Following the preliminaries in \autoref{section:preliminaries}, we prove Theorem \ref{lemma:main:introduction} in \autoref{section:dynamic:program}.
We conclude in \autoref{section:conclusion}. %
\arxiv{}{
	Due to space constraints, we omit some proofs or only give a proof sketch.
	For the full proof, we refer to an online version at \arxivlink.
}


\section{Preliminaries}
\label{section:preliminaries}

For integers \( i < j \), let \( [i] := \{1,2,\dots,i\} \) and \( [i,j] := \{i,(i\!+\!1),\dots,j \} \).
For a list (or vector) \( A \), we describe the \( i \)-th element as \( A[i] \).

All our graphs are simple, finite and undirected.
For a graph \( G \), we denote by \( V(G) \) its set of vertices.
We use \( \Nbh{\GG}(v) \) as the neighborhood of vertex \( v \in V(\GG) \).
Usually we consider graphs with thresholds for each vertex \( \thr: V(\GG) \to [0,\tmax] \) which are at most a constant \( \tmax \),
	and assume that its thresholds \( \thr \) and \( \tmax \) are given, if needed.

In this work, we consider parameterized complexity.
For an introduction see for example \cite{DBLP:books/sp/CyganFKLMPPS15,DBLP:journals/corr/abs-1106-3161,groheBuch,NiedermeierInvitationToFixedParameterAlgorithms2006}.
For a graph class, for example clusters (the disjoint union of cliques), the parameter ``distance to cluster'' is the minimal number of vertices one needs to delete from the input graph in order to obtain a cluster.

The clique-width \( \myfunction{cw}(G) \) of a graph \( \GG \) was introduced in \cite{DBLP:journals/dam/CourcelleO00}.
A graph has clique-width at most \( \ell \in \N \), if it can be constructed by an \( \ell \)-expression that uses four types of operations
	and a labeling of the vertices of at most \( \ell \) labels, as we describe in the following.
Let \( \labset(f) \) be the set of labels used by \( f \).
To avoid confusion with thresholds, we use small Greek letters \( \alpha, \beta, \gamma \) for the labels.
An \( \ell \)-expression defines a graph \( \GG(f) \) with labels per vertex \( \lab_G: V(\GG) \to \labset(f) \).
The graph \( \GG(f) \) is recursively defined as
\begin{itemize}
	\item \( \GG(v(\alpha)) \), a single vertex \( v \) of label \( \alpha \in \labset(f) \),
	\item \( \GG(f_1 \oplus f_2) \), the disjoint union of \( \GG(f_1) \) and \( \GG(f_2) \) for \( \ell \)-expressions \( f_1 \), \( f_2 \),
	\item \( \GG( \eta_{\alpha, \beta} f' ) \), the graph \( \GG(f') \) where there is an \textbf{e}dge between every vertex of label \( \alpha \) and every vertex of label \( \beta \), for \( \ell \)-expression \( f' \), and
	\item \( \GG(\rho_{\alpha \to \beta} f') \), the graph \( \GG(f') \) where all vertices of label \( \alpha \) are \textbf{r}e-labeled to label \( \beta \), for \( \ell \)-expression \( f' \).
\end{itemize}
The subexpressions of \( f \) are all expressions \( f_1, f_2, f' \) used in the recursive definition of \( f \). Especially \( f \) is a subexpression of \( f \).
We drop the \( \GG( \cdot ) \) when using \( \GG(f) \) as a nested term.
	For example, instead of \( V(\GG(f)) \), we simply write \( V(f) \).
Further, we also refrain from specifying the set of labels \( \labset(f) \) if it is clear from the context.

An \( \ell \)-expression is \define{irredundant} if for every subexpression \( \addd f' \) the graph \( \GG(\addd f' ) \) has no edge between vertices of label \( \alpha \) and \( \beta \).
We assume that the given \( \ell \)-expression is irredundant, which we can assure by a simple preprocessing step \cite{DBLP:journals/dam/CourcelleO00}.

\section{Dynamic Program}
\label{section:dynamic:program}

A good way to convince someone that a graph \( \GG \) with thresholds has a target set of size at most \( k \) is to state a complete ordering in which the vertices get active.
We denote this permutation of the vertices as a \textdefine{global ordering} \( \getorder: V(\GG) \to [|V(\GG)|] \).
We say that \( \getorder \) is \( k \)-activating for graph \( \GG \) if
	there is a \( k \)-vertex set \( S \subseteq V(\GG) \), the target set,
	such that for every other vertex \( v \) the neighbors of \( v \) that are ordered before \( v \) outnumber the threshold \( \thr(v) \).

\flexdef{Global Ordering}{
\label{example:graph}
	A \define{global ordering} of a graph \( \GG \) is a permutation of the vertices \( \getorder: V(\GG) \to [|V(\GG)|] \).
	Further, \( \getorder \) is \( k \)-activating (for \( \GG \)) if
		there is a \( k \)-vertex set \( S \subseteq V(G) \) such that for every vertex \( v \in V(\GG) \setminus S \) we have
	\[ \incoming{\getorder}{\GG}(v) \; := \; \big| \bigset{ u \in \Nbh{\GG}(v) }{ \getorder(u) < \getorder(v) } \big| \; \geq \; \thr(v) . \]
	Graph \( G \) has a target set of size \( k \) if
		there is a global ordering \( \getorder \) such that \( \getorder \) is \( k \)-activating for \( \GG \).
}

\begin{example}
	The following graph \( G \) has global ordering \( \getorder: v_i \mapsto i \), 
		which is 1-activating (for \( S = \{ v_1 \} \)).
	Further, \( f = \add{\beta,\gamma} f' =
	\add{\beta,\gamma} ( v_6(\gamma) \oplus v_8(\gamma) \oplus v_{11}(\gamma) \oplus v_9(\gamma) \oplus v_7(\beta) \oplus \rename{\gamma \to \alpha} \add{\beta,\gamma} ( v_{10}(\gamma)
		\oplus \rename{\gamma \to \alpha} \add{\alpha, \beta} \add{\alpha,\gamma} \add{\beta, \gamma} (  
			v_2(\gamma) \oplus v_1(\beta) \oplus v_3(\beta) \oplus v_4(\alpha) \oplus v_5(\alpha)
		 ) ))
	\)
	is a \( 3 \)-expression of \( G \).
	For each vertex, the label among \( \{\alpha,\beta,\gamma\} \) and threshold at most \( \tmax = 2 \) is given as a tuple.
	\\ \begin{center}\newcommand{\posXA}{0}
\newcommand{\posXAB}{2.5}
\newcommand{\posXB}{5}
\newcommand{\posXC}{10}

\newcommand{\posYZ}{-1.2}
\newcommand{\posYB}{1.2}
\newcommand{\posYC}{2.4}

\begin{tikzpicture}
[scale=0.7,auto=left, node/.style={circle,fill=white, draw, scale=0.5}
	,max/.style={circle,fill=black, draw, scale=0.5}]

	\node[node] (a1) at (\posXA,0) {};
	\node[node] (a2) at (\posXA,\posYB) {};
	\node[node] (a3) at (\posXA,\posYZ) {};
	
	\node[node] (bb) at (\posXAB,\posYZ) {};
	
	\node[max] (b1) at (\posXB,0) {};
	\node[node] (b2) at (\posXB,\posYB) {};
	\node[node] (b3) at (\posXB,\posYZ) {};

	\node[node] (c1) at (\posXC,\posYB) {};
	\node[node] (c2) at (\posXC,0.4) {};
	\node[node] (c4) at (\posXC,-0.4) {};
	\node[node] (c3) at (\posXC,\posYZ) {};
	
	\node [below] at (b1.south) {\( \; \; (\beta, 1) \)};
	\node [above] at (b1.north) {\( v_1 \)};
	\node [below] at (b3.south) {\( \; v_7 \), \( (\beta, 1) \)};
	\node [above] at (b2.north) {\( v_3 \), \( (\beta, 1) \)};
	
	\node [below] at (bb.south) {\( v_2 \), \( (\alpha, 1) \; \)};	
	\node [left] at (a1.west) {\( v_4 \), \( (\alpha, 2) \)};
	\node [left] at (a2.west) {\( v_5 \), \( (\alpha, 2) \)};
	\node [left] at (a3.west) {\( v_{10} \), \( (\alpha, 2) \)};
	
	\node [right] at (c1.east) {\( v_6 \), \( (\gamma,2) \)};
	\node [right] at (c2.east) {\( v_8 \), \( (\gamma,2) \)};
	\node [right] at (c4.east) {\( v_{11} \), \( (\gamma,2) \)};
	\node [right] at (c3.east) {\( v_9 \), \( (\gamma,2) \)};
	
	\foreach \to in {a1, a2, b1, b2}
		\draw (bb) -- (\to); 
	
	\foreach \from/\to in {a1/b1,a2/b2,a1/b2, a3/b1, a3/b2, a2/b1}
		\draw (\from) -- (\to); 
	
	\foreach \b in {1,...,3}{
		\foreach \c in {1,...,4}{
			\draw (b\b) -- (c\c);
		}
	}
\end{tikzpicture}\end{center}
	For later examples, let \( A := \exampleA \), and further \( \addd f' \), \( G \) and \( \getorder \) be as defined here. 
\end{example}
\newcommand{\lazyexample}[1]{}

An \( \ell \)-expression \( f \) describes a graph \( G(f) \) with three types of recursive operations that rely on \( \ell \) different labels assigned to the vertices.
We formulate a dynamic program over the subexpressions of \( f \).
At a current subexpression \( f \), a \textdefine{state} fixes a part of a global ordering \( \getorder \).
Whether such a state is a part of a \( k \)-activating global ordering, is verified by considering the subexpressions with suitable states.

In order to obtain the desired \fpt-runtime,
	we may only work with states that fix an ordering of a number of vertices bounded by our parameters, which are \maximum threshold \( \tmax \) and clique-width \( \ell \).
However, the number of all vertices described by a current subexpression is not bounded by our parameters.
Our algorithm thus can only remember an ordering of a limited number of vertices and further cannot address these vertices directly.
We identify the important verices and a suitable way to remember them.
Crucial for the activation of a vertex is its threshold and neighborhood.
	Our local ordering can very well remember the threshold of vertices.
	However, it cannot address the neighborhood even for vertices of currently equal label and threshold since they can have very different neighborhoods as subexpression may reveal.

Consequently, our approach explores the given \( \ell \)-expression top down, and fixes an ordering of the important vertices of the graph described by the up to now seen part of the \( \ell \)-expression.
The up to now seen operations define a common neighborhood for all vertices of a fixed label.
	This is because for every outer operation, two vertices of equal label behave equally.
Thus, our local ordering can indirectly reference the vertices solely by their label and threshold.

Now, let us identify the vertices whose relative ordering is crucial.
We can observe that vertices of the same label that occur late enough in a global ordering behave equally.
An \( \ell \)-expression has only one type of operation that adds edges, namely \( \addd \) for some labels \( \alpha \) and \( \beta \),
	which adds all edges between vertices of labels \( \alpha \) and \( \beta \).
There, for a vertex \( v \) of label \( \alpha \) we have to account for the contribution to the activation of \( v \) by the vertices of label \( \beta \).
Only the first \( \thr(v) \leq \tmax \) vertices of label \( \beta \) of the global ordering \( \getorder \) are important.
Consequently, if \( \getorder \) orders \( v \) somewhere between the first \( \tmax \) vertices of label \( \beta \),
	the local ordering fixes the ordering of \( v \) relatively to those first vertices of label \( \beta \) as well.
If \( \getorder \) orders \( v \) after the first \( \tmax \) of label \( \beta \), we can neglect its exact ordering.
	This is because the number of neighbors of label \( \beta \) that contribute to its activation do not differ from other such late vertices of label \( \alpha \).
Our plan therefore is that the local ordering fixes the relative positioning of these crucial first \( \tmax \) vertices of every label.

Doing so, we need to guarantee that a vertex \( v \) of label \( \alpha \) that is not referenced by our local ordering is indeed ordered late enough.
That is, the first \( \tmax \) vertices of label \( \beta \) occur before vertex \( v \).
	In particular, the first \( \tmax \) vertices of label \( \beta \) are ordered before the \( (\tmax+1) \)-st of label \( \alpha \).
	Then, given that \( v \) is not referenced by our local ordering, there are at least \( \tmax \) of label \( \beta \) ordered before,
		or if there are not even as many of label \( \beta \), accordingly less.
We denote such an ordering as nice to the current subexpression \( \addd f' \).
It is possible to modify any valid global ordering such that it is nice to every subexpression.
Therefore, our algorithm may only consider nice global orderings.
We extend our local ordering to also include the \( (\tmax+1) \)-st vertex of every label.
	Then, whether the underlying global ordering \( \getorder \) is nice to a current expression \( \addd f' \), is reflected in our local ordering.
Our algorithm may then ignore states with such not nice local orderings.

We define the local ordering \( A \) for a current \( \ell \)-expression \( f \) that fixes the relative ordering of the first \( (\tmax+1) \) activate vertices for each label \( \alpha \)
	(or if there are not even as many vertices of label \( \alpha \), accordingly less), which we denote by \( \tamount{\alpha} \).
We indirectly remember a vertex \( v \) by fixing the label and threshold of \( v \).
For technical reasons, we define a local ordering as possibly incomplete.
Our algorithm only considers complete local orderings.

\flexdef{Local Ordering}{
Let \( \GG \) be a graph with labels \( \lab: V(\GG) \to \labset(\GG) \).
For label \( \alpha \), let \( \tamount{\alpha}(\GG) :=\allowbreak \min\{ \allowbreak \tmax( \GG )+1,\allowbreak \; | \set{v \in V(\GG)}{\lab(v) = \alpha} | \} \).
A \define{local ordering} \( A \) of \( \GG \) is a list of tuples of label and threshold \( (\alpha, a) \in \labset(\GG) \times [0,\tmax(\GG)] \) such that for every label \( \alpha \) there are at most \( \tamount{\alpha} \) tuples of label \( \alpha \); and
\( A \) is \define{complete} if, for every label \( \alpha \), there are exactly \( \tamount{\alpha} \) tuples of label \( \alpha \).
}

The local ordering \( A \) is our limited view on a global ordering \( \getorder \).
Let \( \condense(\getorder) \) be the ordered list of vertices consisting of the first \( \tamount{\alpha} \) vertices of each label \( \alpha \).
A global ordering \( \getorder \) \textdefine{extends} \( A \) if the tuples of label and threshold of \( \condense(\getorder) \) are equal to \( A \).
%
As a technical tool, we also define \( \condense(\getorder,A) \) as the first ordered vertices consisting of each label \( \alpha \),
 	such that the number of vertices labeled \( \alpha \) is equal to as there are in \( A \).

\flexdef{Extending a Local Ordering}{
	Let graph \( \GG \) have global ordering \( \getorder \).
	Consider the list of vertices according to the global ordering \( \vertify(1), \dots,\allowbreak \vertify(|V(\GG)|) \).
	For every label \( \alpha \), remove all vertices of label \( \alpha \) but the first \( \tamount{\alpha} \) vertices of label \( \alpha \).
	Then, the resulting list is \( \condense(\getorder) \).
	Global ordering \( \getorder \) \define{extends} a local ordering \( A \) (for \( \GG \)) if the list tuples of label and threshold of \( \condense(\getorder) \) is equal to \( A \).
	
	Let \( \condense(\getorder, A) \) be the remaining list, after,
			for every label \( \alpha \), removing all vertices of label \( \alpha \) but the first \( |\set{ i }{ \lab(A[i]) = \alpha }| \) of label \( \alpha \).
}

\begin{example}
	\lazyexample{Consider the graph, global ordering \( \getorder \) and local ordering \( A \) of Example \ref{example:graph}.}
	We have \( \tamount{\alpha}, \tamount{\beta}, \tamount{\gamma} = 3 \)
		and \( A \) is a complete local ordering of \( \GG \). 
	Further, \( \condense(\getorder) = \condense(\getorder,A) = (v_1, \dots ,v_9) \), 
		whose list of tuples of label and threshold is equal to \( A \).
	Thus, \( A \) extends \( \getorder \).
	Let incomplete local ordering \( A^\inc \) contain only one tuple per label.
	Then, \( \condense(\getorder, A^\inc) \) is the list of vertices \( (v_1, v_2, v_6) \).
	The list of tuples of label and threshold is equal to \( A^\inc \).
\end{example}

For an edge operation \( \addd \), which adds all edges between vertices of two distinct labels,
	we simply have to adjust the number of neighbors contributing to an activation of a vertex according to our fixed local ordering.
We remember this contribution as the \textdefine{activation from outside}.
%
The mapping \( \afo \) maps to a value \( [0,\tmax] \) for each position of the local ordering \( A \), as well as maps to a value for each label.
That way we have a value for every vertex indirectly referenced by \( A \).
Further, there is a value for every vertex \( v \) not referenced by \( A \), which we identify via the label of \( v \).

A \textdefine{state} of a current subgraph \( \GG(f) \) is a tuple consisting of a local ordering \( A \) and an activation from outside \( \afo \).
To reference the activation from outside for a concrete vertex \( v \) we define \( \deactt(v) \) such that \( \afo(\deactt(v) ) \) is the activation from outside for \( v \).
Thus, \( \deactt(v) \) maps \( v \) to its according position in \( A \) if it exists and otherwise to the label of \( v \).
A global ordering \( \getorder \) is \( k \)-activating for a state \( (A,\afo) \) of \( \GG \) if it is \( k \)-activating for \( \GG \)
	while supported by the activation from outside \( \afo \).

\flexdef{Activation From Outside}{
	Let \( f \) be an \( \ell \)-expression, and graph \( \GG(f) \) have local ordering \( A \).
	An \define{\textbf{a}ctivation \textbf{f}rom \textbf{o}utside} for \( A \) is a mapping \( \afo: [|A|] \cup \labset(f) \to [0,\tmax] \).
	Then, the tuple \( (A,\afo) \) is a \define{state} of \( \GG(f) \).
	For a global ordering \( \getorder \) of \( \GG(f) \), let
	\( \deactt : V(f) \to [|A|] \cup \labset(f) \),
	\[ \deactt(v) \mapsto
		\begin{dcases}
			i,			& i \in [|A|], \; v = \condense(\getorder,A)[i],
		\\	\lab(v),	& \text{else.}
		\end{dcases}
	\]
	A global ordering \( \getorder \) of \( \GG(f) \) is \define{\( k \)-activating} for \( (A,\afo) \)
		if there is \( k \)-vertex set \( S \subseteq V(G) \) such that for every vertex \( v \in V(\GG) \setminus S \) 
	 we have that
		\[ \incoming{\getorder}{\GG}(v) \; \geq \; \thr(v) - \afoa{v} . \]
}
\begin{example}
\label{example:afo}
	\lazyexample{Consider the graph, global ordering \( \getorder \) and local ordering \( A \) from Example \ref{example:graph}.}
	Let \( \afo(1) = 1 \), and for \( x \in \{2,\dots,6,\allowbreak \alpha,\allowbreak \beta, \gamma\} \), let \( \afo(x) = 0 \).
	The activation from outside for vertex \( v_1 \) is \( \afo( \deactt( v_1 ) ) = \afo(1) = 1 \) and for vertex \( v_{10} \) it is \( \afo( \deactt(v_{10} )) = \afo( \alpha ) = 0 \).
	Further, \( \getorder \) is \( 0 \)-activating for state \( (A,\afo) \).
\end{example}

\comment{
We would like to uniformly remember the activation from outside for vertices of label \( \alpha \) that are not among the first \( \tmax \) of label \( \alpha \).
Consider for example that we add edges between all vertices of label \( \alpha \) and all of label \( \beta \).
	Let there be a vertex \( v \) of label \( \alpha \) which is the \( (\tmax\!+\!2) \)-nd active of label \( \alpha \).
An unfortunate local ordering may order \( v \) before \( \tmax \) many of label \( \beta \) occur.
Our algorithm, only references the first \( (\tmax+1) \) of label \( \alpha \), and does not know how \( v \) is ordered in \( \getorder \).
We enforce that the first \( \tmax \) vertices of label \( \beta \) get active before \( v \) and thus contribute to its activation.
	In that case, an activation of the \( (\tmax+2) \)-nd vertex of label \( \beta \), and all later vertices of label \( \beta \), is most likely.
We denote such a global ordering as \textit{nice} to \( f \).
That means, every vertex of label \( \alpha \) that is not referenced by \( A \), is ordered after the first \( \tmax \) of label \( \beta \),
	and vice versa for switched \( \alpha \) and \( \beta \).
}

We define nice orderings, analogously for global orderings \( \getorder \) and local orderings \( A \).
As we show in the following, for every \( k \)-activating global ordering \( \getorder \) there is a slightly modified \( k \)-activating global ordering \( \getorder \) which is nice to every subexpression of \( f \).
Our local ordering \( A \) includes the \( (\tmax + 1) \)-st vertex of every label.
Thus, whether \( \getorder \) is to nice the current expression \( f \) is expressed in the ordering of \( A \).
Therefore, our algorithm can avoid not nice global orderings by ignoring states where the local ordering \( A \) is not nice to \( f \).

\flexdef{Nice Orderings}{
	Let \( \GG \) be a graph with global ordering \( \getorder \).
	Let \( f \) be an \( \ell \)-expression describing a subgraph of \( \GG \).
	For label \( \alpha \), let \( v_\alpha[1], v_\alpha[2], \dots \in V(f) \) be the vertices of label \( \alpha \) of \( \GG(f) \) ordered ascending according to \( \getorder \).
	For every label \( \alpha \), let \( \tmaxx{\alpha} :=\allowbreak \min\{ \allowbreak \tmax( \GG ),\allowbreak \; | \set{v \in V(\GG)}{\lab(v) = \alpha} | \} \).
	Then, \( \getorder \) is \define{nice to \( f \)} if \( f = \addd f' \) implies that (if those respective positions exist)
	\[
		\getorder( {v_\alpha}[\tmax \! + \! 1] ) \; > \; \getorder( {v_{\beta}}[\tmaxx{\beta}] )
		\; \; \; \text{ and } \; \; \; 
		\getorder( {v_\beta}[\tmax \! + \! 1] ) \; > \; \getorder( {v_{\alpha}}[\tmaxx{\alpha}] )
	. \]
	Let \( A \) be the list of tuples of label and threshold of \( \condense(\getorder \! \upharpoonright_{V(f)} ) \) for graph \( \GG(f) \),
		where \( \getorder \! \upharpoonright_{V(f)} \) is \( \getorder \) restricted to vertices \( V(f) \).
	Then, \( A \) is nice to \( f \) if (and only if) \( \getorder \) is nice to \( f \).
}

\begin{example}
\label{example:getorder:reorder}
	\lazyexample{Consider the graph \( G \), global ordering \( \getorder \) and local ordering \( A \) from Example \ref{example:graph}.}
	Global ordering \( \getorder \) is not nice to \( \add{\beta,\gamma} f' \) since
		\( \getorder( {v_\beta}[\tmax+1] ) = \getorder( v_7) = 7 \ngtr 8 = \getorder(v_8) = \getorder( {v_{\gamma}}[\tmax] ) \).
	By switching the 7th and 8th position \( \getorder \) becomes nice to \( \add{\beta,\gamma} f' \).
	Likewise, \( A \) is not nice to \( \add{\beta,\gamma} f' \), but \( A' = \exampleAf \) is nice to \( \add{\beta,\gamma} f' \).
\end{example}

\newcommand{\lemmaFix}{
	Let \( f \) be an \( \ell \)-expression and \( \getorder \) a global ordering that is \( k \)-activating for graph \( \GG(f) \).
	Then, there is a global ordering \( \getorder' \) that is \( k \)-activating for graph \( \GG(f) \)
		and nice to every subexpression of \( f \).
}
\begin{lemma}
\label{lemma:fix}
	\lemmaFix
\end{lemma}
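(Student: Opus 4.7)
I plan to prove the lemma by induction on the structure of $f$. The easy cases are dispatched first: $f = v(\alpha)$ is trivial; for $f = \renamee f'$, the only niceness condition added at this level is vacuous and $\GG(f) = \GG(f')$, so the induction hypothesis applied to $f'$ with the same $\getorder$ suffices; for $f = f_1 \oplus f_2$, since $\GG(f_1)$ and $\GG(f_2)$ are edge-disjoint, I first un-interleave $\getorder$ so that $V(f_1)$ entirely precedes $V(f_2)$ (which does not alter the number of preceding neighbors of any vertex), and then apply the hypothesis to each side independently.

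The crux is the case $f = \addd f'$, which I split into two phases. In the first phase I repair niceness to $f$ itself by a sequence of shifts. If $\getorder$ fails niceness for $f$, then up to symmetry there is a $\beta$-vertex $v$ of rank at most $\tmaxx{\beta}$ among $\beta$-vertices of $V(f)$ that lies after $u := v_\alpha[\tmax+1]$ in $\getorder$. I reposition $v$ into the slot $\getorder(u)$, shifting the vertices at positions $\getorder(u), \dots, \getorder(v)-1$ one step to the right. The key observation is that the first $\tmax$ $\alpha$-vertices of $V(f)$ occupy positions strictly below $\getorder(u)$, are undisturbed by the shift, and are all neighbors of $v$ through the outer $\addd$; hence $v$ still has at least $\tmax \geq \thr(v)$ preceding neighbors, and every shifted vertex only gains $v$ as predecessor, so $k$-activation is preserved. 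A short computation shows that the $\alpha$- and $\beta$-ranks inside $V(f)$ of all vertices in the two early sets are unaltered by the shift, so the bad-pair count of $f$ strictly decreases and finitely many shifts bring $\getorder$ into a form nice to $f$.

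The main obstacle is the second phase: extending niceness to every subexpression of $f'$ without losing the niceness of $f$ or $k$-activation. A direct appeal to the induction hypothesis on $f'$ is blocked because $\GG(f')$ is missing the outer $\addd$ edges, so $\getorder$ need not be $k$-activating for $\GG(f')$. I would route around this by strengthening the induction statement to permit an activation-from-outside function (as in the $\afo$ map used in the subsequent dynamic program), encoding per-vertex threshold reductions caused by edges already provided by enclosing subexpressions. Under this strengthening, descending from $f$ into $f'$ just means augmenting the outside-activation by the $\addd$-contributions already realized in $\getorder$, and the hypothesis applies recursively. The concluding verification is that the shifts performed inside the recursive call, which again only move a vertex to an earlier slot, do not re-break the niceness of $f$ established in the first phase, because they leave the $\alpha$- and $\beta$-ranks of $V(f)$ intact for labels inherited unchanged from $f'$ and preserve the invariant that all of $L_\alpha$ and $L_\beta$ occupy positions before $v_\alpha[\tmax+1]$ and $v_\beta[\tmax+1]$.
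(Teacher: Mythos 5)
Your phase-one repair is exactly the paper's move (the paper pulls all missing early $\beta$-vertices forward in one block rather than one at a time, but the justification is identical: the first $\tmax$ $\alpha$-vertices sit undisturbed before the insertion point and are neighbors of the pulled vertices via $\addd$, so those vertices have at least $\tmax\geq\thr(v)$ predecessors, while delayed vertices only gain predecessors). The divergence is architectural: the paper never leaves the full graph $\GG(f)$. It keeps one global ordering, walks top-down over the $\addd$-subexpressions only (the $\oplus$ and $\renamee$ operations impose no niceness condition, so they require no action), performs this one local repair wherever needed, and checks that a repair at an inner $\addd$-subexpression cannot re-break an already-repaired outer one because the pulled vertices keep a common label in every enclosing expression. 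Activation is always evaluated in $\GG(f)$, so no activation-from-outside is ever needed in this lemma.

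Your structural induction, by contrast, has a genuine gap, most concretely in the $\oplus$ case. Once $f_1\oplus f_2$ is a \emph{proper} subexpression, the enclosing $\addd$ operations have already added edges between $V(f_1)$ and $V(f_2)$ (both sides contribute vertices to the joined label classes), so un-interleaving $\getorder$ so that $V(f_1)$ precedes $V(f_2)$ does change the number of preceding neighbors of vertices of $V(f_1)$ in the full graph, invalidates any per-vertex outside-activation values computed from the old ordering, and can destroy the niceness of outer subexpressions you have already established (e.g.\ when the outer $v_{\alpha'}[\tmax+1]$ lies in $V(f_1)$ and the outer $v_{\beta'}[\tmaxx{\beta'}]$ in $V(f_2)$). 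The claim that un-interleaving alters nothing is only true at the top level, which is precisely where you do not need it. More broadly, the strengthened hypothesis with an $\afo$-like function is left unformulated: your induction hypothesis is existential, so it returns \emph{some} nice ordering of the subgraph with no control over how it relates to the ambient ordering, and you have no mechanism for splicing it back while preserving activation across outer edges and outer niceness. Repairing this essentially forces you to rebuild the state machinery of Lemmas \ref{lemma:adding}--\ref{lemma:relabel} inside this proof; the paper's single-ordering, top-down, repair-only-at-$\addd$ argument avoids the problem entirely, and I would recommend restructuring your proof along those lines, keeping your (correct) local shift as the only modification ever performed.
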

\begin{proof}[\proofSketch]
	There may be subexpressions \( \addd f' \) where the \( (\tmax  +  1) \)st vertex of label \( \alpha \) is ordered before the first \( \tmax \) vertices of label \( \beta \), formally \( \getorder( v_\alpha[\tmax  +  1] ) =: i < \getorder( v_\beta[\tmaxx{\beta}]) \).
	We repair such a violation by moving all vertices of \( v_\beta[1], \dots, v_\beta[\tmax^\beta] \) that did not occur already between positions \( (i - 1) \) and \( i \).
	Since there are \( \tmax \) vertices of label \( \alpha \) ordered before position \( i \), the modified local ordering is still activating.
	We repair all such violations top-down.
	Following this order prevents recursive violations for already fixed subexpression \( \addda \).
	\seeFullProof{\ref{appendix:lemma:fix}}.
\end{proof}

\flexdef{\( k \)-activating}{
	Graph \( \GG(f) \) is \define{\( k \)-activating} for a state \( (A,\afo) \)
		if there is a global ordering \( \getorder \) that extends \( A \), is \( k \)-activating for \( (A,\afo) \), and is nice to every subexpression of \( f \).
}

\newcommand{\lemmaTranslateToState}{
	Let \( f \) be an \( \ell \)-expression.
	Then, graph \( \GG(f) \) has a target set of size \( k \)
		if and only if
		there is a complete local ordering \( A \) of \( \GG(f) \) such that
		\( \GG(f) \) is \( k \)-activating for state \( (A,\nullf) \), where \( \nullf: [|A|] \cup \labset(\GG) \to \{0\} \).
}

\begin{lemma}
\label{lemma:translate:to:state}
	\lemmaTranslateToState
\end{lemma}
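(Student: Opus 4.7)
The plan is to show both directions essentially by unpacking the definitions, with the forward direction relying crucially on Lemma \ref{lemma:fix} to enforce the niceness condition. The state $(A, \nullf)$ with trivial activation from outside should correspond exactly to a standard $k$-activating global ordering, and the local ordering $A$ is produced by condensing such a global ordering. The only nontrivial work is verifying that we may assume niceness to every subexpression without loss, which is exactly what Lemma \ref{lemma:fix} provides.

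For the forward direction $(\Rightarrow)$, suppose $G(f)$ has a target set of size $k$. By definition of a target set there is a $k$-activating global ordering $\getorder$ for $G(f)$. Apply Lemma \ref{lemma:fix} to obtain a global ordering $\getorder'$ that is still $k$-activating for $G(f)$ and additionally nice to every subexpression of $f$. Define $A$ as the list of label-threshold tuples of $\condense(\getorder')$. Since $\getorder'$ is a permutation of $V(f)$, for every label $\alpha$ there are exactly $\tamount{\alpha}$ tuples of label $\alpha$ in $A$, so $A$ is a complete local ordering. By construction $\getorder'$ extends $A$. For every vertex $v \in V(G) \setminus S$, where $S$ is the $k$-vertex target set witnessing $k$-activation, we have $\incoming{\getorder'}{G}(v) \geq \thr(v) = \thr(v) - \nullf(\deactt(v))$, so $\getorder'$ is $k$-activating for the state $(A, \nullf)$. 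Hence $G(f)$ is $k$-activating for $(A, \nullf)$.

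For the backward direction $(\Leftarrow)$, suppose there is a complete local ordering $A$ such that $G(f)$ is $k$-activating for $(A, \nullf)$. Then by the definition of $k$-activating for a state, there is a global ordering $\getorder$ that extends $A$, is $k$-activating for $(A, \nullf)$, and is nice to every subexpression of $f$. In particular, there exists a $k$-vertex set $S \subseteq V(G)$ such that for every $v \in V(G) \setminus S$,
\[ \incoming{\getorder}{G}(v) \; \geq \; \thr(v) - \nullf(\deactt(v)) \; = \; \thr(v). \]
Thus $\getorder$ is $k$-activating for $G(f)$ in the plain sense, witnessing a target set of size $k$.

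The only potential subtlety is ensuring that the two notions of $k$-activating line up correctly through the indirection of $\deactt$ and $\afo$; this is handled by observing that $\nullf$ contributes $0$ everywhere, so the condition collapses to the plain one. The nontrivial work — repairing a global ordering to be nice to every subexpression without destroying $k$-activation — has already been done in Lemma \ref{lemma:fix}, so no further obstacle arises here.
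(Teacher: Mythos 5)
Your proof is correct and follows essentially the same route as the paper: Lemma \ref{lemma:fix} supplies niceness in the forward direction, and both directions otherwise collapse because $\nullf$ contributes $0$ everywhere. The only cosmetic difference is that you construct the complete local ordering $A$ directly from $\condense(\getorder')$, whereas the paper starts from the empty local ordering and invokes its auxiliary completion lemma (Lemma \ref{lemma:extend:A}) to the same effect.
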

\begin{proof}[\proofSketch]
	Use Lemma \ref{lemma:fix}.
	\seeFullProof{\ref{appendix:lemma:translate:to:state}}.
\end{proof}

It remains to specify the recursive dependency of our computation.
We distinguish the three operations, which are adding edges if \( f = \addd f' \),
	taking the disjoint union if \( f = f_1 \oplus f_2 \),
	and re-labeling if \( f = \renamee f' \).

Consider a current \( \ell \)-expression \( \addd f' \) and a state \( (A,\afo) \).
The operation \( \addd \) adds the edges between all vertices of label \( \alpha \) and \( \beta \).
We adjust the activation from outside such that it replaces the edges between vertices of label \( \alpha \) and \( \beta \).
The relative ordering of the first \( (\tmax+1) \) vertices of label \( \alpha \) and label \( \beta \) is already fixed by the local ordering \( A \).
We increase the activation from outside of a position \( y \) of \( A \) of label \( \beta \) for every prior position \( x \) of \( A \) of label \( \alpha \).
For the activation from outside for vertex \( v \) of label \( \alpha \) that is not referenced by \( A \),
	every position \( x \) of \( A \) of label \( \beta \) increases the activation from outside.
We denote the result as \( \addd \afo \).

\flexdef{\( \addd \afo \)}{
	Let graph \( \GG \) with labels \( \alpha \) and \( \beta \) have local ordering \( A \).
	For \( y \in [|A|] \cup \labset(\GG) \), let
	\begin{align*}		
		& (\addd \afo) (y) :=
			\min\!\big\{ \tmax, \; \afo( y ) + \addvalue{y}  \big\}, \; \; \text{where}
		\\& \addvalue{y} :=  |\bigset{ x \in [|A|] }{ \; x<y, \; \{\lab(x),\lab(y)\} = \{\alpha,\beta\} }|
			,
	\end{align*}
	where \( 1 <  2 < \dots < |A| < \gamma \), for every label \( \gamma \); and where 
	\( \lab(x) \), for \( x \in [|A|] \), is defined as \( \lab(A[x]) \).
	For every vertex \( v \in V(\GG) \), let
	\[ \adde(v) := | \bigset{ u \in V(\GG) }{ \getorder(u) < \getorder(v), \; \{ \lab(u), \lab(v) \} = \{\alpha,\beta\} } |. \]
}

The number of edges that additionally contribute to the activation of a vertex \( v \), denoted by \( \adde(v) \), is equal to the increase of the activation from outside \( \addvalue{v} \)
	(while ignoring an overall activation exceeding \( \tmax \)).

\newcommand{\lemmaKey}{
	Let global ordering \( \getorder \) extend local ordering \( A \), which is nice to \( \addd f' \).
	For every vertex \( v \in V(\addd f') \), we have that
		\[ \min\{ \tmax, \allowbreak \; \afo(\deactt(v)) + \adde(v) \} = (\addd \afo) (\deactt(v)) . \]
}	
\begin{lemma}
\label{lemma:key}
	\lemmaKey
\end{lemma}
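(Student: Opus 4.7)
The plan is to show directly that the quantity $\afo(\deactt(v)) + \adde(v)$ agrees with $\afo(\deactt(v)) + \addvalue{\deactt(v)}$, at least after truncating by $\tmax$. Since both expressions share the $\afo(\deactt(v))$ summand, it suffices to prove that $\adde(v) = \addvalue{\deactt(v)}$, or that both $\adde(v)$ and $\addvalue{\deactt(v)}$ are at least $\tmax$.

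I would dispatch the easy case first: if $\lab(v) \notin \{\alpha,\beta\}$, then $\adde(v) = 0$ by its definition, and $\addvalue{\deactt(v)} = 0$ since the condition $\{\lab(x),\lab(\deactt(v))\} = \{\alpha,\beta\}$ cannot be met. So assume without loss of generality that $\lab(v) = \alpha$. The core observation I will exploit is that because $\getorder$ extends $A$, the $\beta$-entries of $A$ correspond, in order, to the first $\tamount{\beta}$ vertices of label $\beta$ according to $\getorder$; and similarly for $\alpha$. Hence for any position $x_k < \deactt(v)$ carrying label $\beta$, the associated vertex $v_\beta[k]$ satisfies $\getorder(v_\beta[k]) < \getorder(v)$, giving the inequality $\addvalue{\deactt(v)} \leq \adde(v)$ essentially for free.

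The real work is the reverse inequality, and this is where the niceness hypothesis enters. I would split into two subcases. First, suppose $\deactt(v) = i \in [|A|]$, so $v = v_\alpha[j]$ for some $j \leq \tamount{\alpha}$. If $j \leq \tmaxx{\alpha}$, niceness gives $\getorder(v) \leq \getorder(v_\alpha[\tmaxx{\alpha}]) < \getorder(v_\beta[\tmax+1])$, so no $\beta$-vertex outside $A$ can precede $v$, and every $\beta$-vertex counted by $\adde(v)$ has a matching position $x_k < i$ in $A$, yielding equality. If instead $j = \tmax+1$ (only possible in the case where $\tamount{\alpha} = \tmax+1$), the other niceness inequality gives that the first $\tmaxx{\beta}$ $\beta$-vertices precede $v$; when $\tmaxx{\beta} = \tmax$ both counts are at least $\tmax$, and when $\tmaxx{\beta} < \tmax$ all $\beta$-vertices lie in $A$ and equality follows as before. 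Second, suppose $\deactt(v) = \alpha$, meaning $v$ is not referenced by $A$; then necessarily $\tamount{\alpha} = \tmax+1$ and $v = v_\alpha[j]$ with $j \geq \tmax+2$, so niceness again forces $\getorder(v) > \getorder(v_\beta[\tmaxx{\beta}])$, hence $\adde(v) \geq \tmaxx{\beta}$. On the right-hand side, $\addvalue{\alpha}$ equals the total number of $\beta$-entries in $A$, namely $\tamount{\beta}$: if $\tamount{\beta} = \tmax + 1$ both sides saturate at $\tmax$, and otherwise $A$ contains every $\beta$-vertex and $\adde(v) = \tmaxx{\beta} = \tamount{\beta}$ gives exact equality.

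The main obstacle I expect is keeping the bookkeeping straight across the interplay of the four subcases (whether $v$ is referenced by $A$, and whether $\tamount{\alpha}$ or $\tamount{\beta}$ saturates at $\tmax+1$). The saving grace is that niceness was designed precisely so that the $(\tmax+1)$-st entry of each label in $A$ records exactly the information needed to tell whether a $\beta$-vertex not in $A$ can sneak in before a vertex $v$ of label $\alpha$; once one sees that both truncations simultaneously hit $\tmax$ in the otherwise problematic cases, the rest is routine verification.
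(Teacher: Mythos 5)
Your proposal is correct and follows essentially the same route as the paper's proof: dispatch labels outside \( \{\alpha,\beta\} \), reduce to comparing \( \adde(v) \) with \( \addvalue{\deactt(v)} \) up to truncation at \( \tmax \), use the fact that \( \getorder \) extends \( A \) to translate between positions of \( A \) and the \( \getorder \)-order, and invoke niceness precisely where \( v \) is ordered after the vertices referenced by \( A \). Your reduction ``equal, or both at least \( \tmax \)'' is in fact slightly more careful than the paper's intermediate identity \( \min\{\tmax, \adde(v)\} = \addvalue{\deactt(v)} \), which can be off by one when all \( \tmax+1 \) of the \( \beta \)-entries of \( A \) precede \( v \); both versions yield the lemma because of the outer truncation in \( \addd\afo \).
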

\begin{proof}[\proofSketch]
	We need to show for every vertex \( v \) that the number of new neighbors ordered before, \( \adde(v) \), is equal to how much we increase \( \afo(\deactt(v)) \), when capped by \( \tmax \).
	Since \( A \) is nice to \( \addd f' \), this number of new neighbors is correctly expressed by comparing \( v \) with its neighbors of label \( \beta \) in \( A \), which is how \( \addvalue{ \deactt(v) } \) is computed.
	\comment{
	Consider the statement for a vertex \( v \) of label \( \alpha \).
	That simplifies the terms \( \adde(v) \) and \( \addvalue{ \deactt(v) } \) to 
	\( 
		\adde(v) = | \set{ u \in V }{ \getorder(u) < \getorder(v), \; \lab(u)= \beta } | \), and \(
		\addvalue{ \deactt(v) } = |\set{ x \in [|A|] }{ x<\deactt(v), \; \lab(x) = \beta }|
		.
	\) 
	We show that \( \min\{ \tmax, \; \adde(v) \} = \addvalue{ \deactt(v) } \), which implies the statement.
	
	Consider that \( v \) is among the first \( \tmax \) vertices of label \( \alpha \) according to the global ordering \( \getorder \).
	The first \( \tmaxx{\beta} \) vertices of label \( \beta \) according to \( \getorder \)
		have their relative order preserved in their representation of the local ordering \( A \).
	Instead of comparing their ordering in \( \getorder \) we can compare their ordering in \( A \), and vice versa.
		
	It remains to consider the case that \( v \) is not among the first \( \tmax \) of label \( \alpha \) according to \( \getorder \).
	Then, we rely on that \( \getorder \) is nice to \( f \).
	We have that the \( (\tmax+1) \)st vertex of label \( \alpha \) has the first \( \tmax \) vertices of label \( \beta \) ordered before.
	Especially, \( v \) has \( \tmax \) many vertices of label \( \beta \) ordered before.
	This is equal to the amount of \( x \in [|A|] \) of label \( \beta \) with \( x < \deactt(v) = \alpha \), when capped by \( \tmax \).
	}
	\seeFullProof{\ref{appendix:lemma:key}}.
\end{proof}


\newcommand{\lemmaAdding}{
	Graph  \( \GG( \addd f' ) \) is \( k \)-activating for state \( (A,\afo) \)
		if and only if
		\( A \) is nice to \( \addd f' \) and
		\( \GG(f') \) is \( k \)-activating for \( (A, \addd\afo) \).
}
\begin{lemma}
\label{lemma:adding}
	\lemmaAdding
\end{lemma}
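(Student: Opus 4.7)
The plan is to prove both directions by exhibiting the \emph{same} witnessing global ordering $\getorder$ and the same target set $S$, exploiting that $\GG(\addd f')$ and $\GG(f')$ share the same vertex set (and the same labeling, so $\condense$, $\deactt$, and ``extends $A$'' coincide) and that $\addd$ only adds edges. The only things that move between the two sides of the equivalence are (i) the niceness requirement for the subexpression $\addd f'$ itself, which on the right becomes the explicit hypothesis that $A$ is nice to $\addd f'$, and (ii) the contribution of the newly added edges, which will be pushed into the activation-from-outside via Lemma~\ref{lemma:key}.

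First, I unfold both ``$k$-activating for a state'' definitions. In both cases I obtain a $\getorder$ that extends $A$ and is nice to every subexpression of the respective $\ell$-expression. The subexpressions of $\addd f'$ are exactly $\addd f'$ together with the subexpressions of $f'$, so niceness of $\getorder$ on all subexpressions of $f'$ transfers verbatim between the two sides; niceness to $\addd f'$ itself is, by the definition of nice orderings, equivalent to $A$ being nice to $\addd f'$. This accounts for the niceness bookkeeping on both directions.

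The core computation is the threshold inequality. Observe that for every $v \in V(f')$,
\[
    \incoming{\getorder}{\GG(\addd f')}(v) \;=\; \incoming{\getorder}{\GG(f')}(v) \,+\, \adde(v),
\]
since $\addd$ adds precisely the $\alpha$--$\beta$ edges counted by $\adde(v)$. For the forward direction, starting from $\incoming{\getorder}{\GG(\addd f')}(v) \geq \thr(v) - \afo(\deactt(v))$, subtract $\adde(v)$ to rewrite the bound in the form $\thr(v) - \afo(\deactt(v)) - \adde(v)$. By Lemma~\ref{lemma:key}, $(\addd\afo)(\deactt(v)) = \min\{\tmax,\, \afo(\deactt(v)) + \adde(v)\}$; I split on whether the $\min$ caps. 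If it does not cap, the bound immediately becomes $\thr(v) - (\addd\afo)(\deactt(v))$. If it does cap at $\tmax$, then $\thr(v) - (\addd\afo)(\deactt(v)) = \thr(v) - \tmax \leq 0 \leq \incoming{\getorder}{\GG(f')}(v)$ and the inequality is trivial; this case uses $\thr(v) \leq \tmax$.

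For the backward direction, starting from $\incoming{\getorder}{\GG(f')}(v) \geq \thr(v) - (\addd\afo)(\deactt(v))$, add $\adde(v)$ and use the one-sided bound $(\addd\afo)(\deactt(v)) \leq \afo(\deactt(v)) + \adde(v)$ (which is immediate from the $\min$ in the definition, without needing Lemma~\ref{lemma:key} itself) to conclude $\incoming{\getorder}{\GG(\addd f')}(v) \geq \thr(v) - \afo(\deactt(v))$. The main obstacle is keeping the capping at $\tmax$ from breaking the forward direction, but this is exactly where the hypothesis $\thr(v) \leq \tmax$ rescues us, so the whole argument reduces to the case split described above plus a careful but routine application of Lemma~\ref{lemma:key}.
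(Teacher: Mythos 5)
Your argument is correct and follows essentially the same route as the paper's proof: the same witnessing ordering and target set on both sides, the identity \( \incoming{\getorder}{\addd f'}(v) = \incoming{\getorder}{f'}(v) + \adde(v) \) (which, as the paper notes, relies on the expression being irredundant), the same niceness bookkeeping, and the cap-versus-no-cap case split that uses \( \thr(v) \leq \tmax \). One small correction for the backward direction: the bound \( (\addd\afo)(\deactt(v)) \leq \afo(\deactt(v)) + \adde(v) \) is \emph{not} immediate from the definition of \( \addd\afo \) alone, since that definition is stated in terms of \( \addvalue{\deactt(v)} \), a count over positions of the local ordering \( A \), whereas \( \adde(v) \) counts actual predecessors of \( v \) under \( \getorder \); relating these two quantities is precisely the content of Lemma~\ref{lemma:key}, so you should invoke it there as well (after which the inequality \( \min\{\tmax,\, \afo(\deactt(v)) + \adde(v)\} \leq \afo(\deactt(v)) + \adde(v) \) is indeed trivial), exactly as the paper does.
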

\begin{proof}[\proofSketch]
	We assume that
	 the \( \ell \)-expression \( \addd f' \) is irredundant as mentioned in the preliminaries.
	Then, every edge between vertices of label \( \alpha \) and \( \beta \) is new to \( \GG(f') \) such that
	\( \incoming{\getorder}{\addd f'}(v) \; = \; \incoming{\getorder}{f'}(v) + \adde(v) \).
	For the forward direction, let \( \GG(\addd f) \) have global ordering \( \getorder \) that extends \( A \), is \( k \)-activating for state \( (A,\afo) \) and nice to every subexpression fo \( \addd f' \).
	It follows directly that \( A \) is nice to \( \addd A \).
	We in particular show that the same ordering \( \getorder \) is \( k \)-activating for the modified state \( (A,\addd \afo) \).
	That is, every non-target set vertex \( v \) has \( \incoming{\getorder}{f'}(v) \geq \thr(v) - (\addd\afo)(\deactt(v)) \).
	We can follow this result from our initial observation and by applying Lemma \ref{lemma:key}.
	The backward direction is similar.
	\seeFullProof{\ref{appenidx:lemma:adding}}.
\end{proof}

In case of a current expression \( f = f_1 \oplus f_2 \), we have to show how to recursively rely on the subexpressions \( f_1 \) and \( f_2 \),
	analogously for \( f = \renamee f' \), on subexpression \( f' \).
For both cases, vertices of label \( \beta \) potentially come from different sets of vertices.
	In case of a re-labeling form \( \alpha \) to \( \beta \), a vertex of label \( \beta \) possibly had label \( \alpha \) before or already had label \( \beta \).
	In case of a disjoint union of subgraphs, a vertex of label \( \beta \) (or any other label) can be from either subgraph \( \GG(f_1) \) or \( \GG(f_2) \).	
For our indirect referenced vertices of our local ordering \( A \), we do not know the true origin.
	Thus, we have to try all possible partitions of label \( \beta \) into labels \( \alpha \) and \( \beta \),
	respective all partitions of label \( \beta \) (and every other label) into either subgraph.
As the possible local orderings \( A \) are bounded by our parameters, also the possible partitions are bounded by our parameters.

\flexdef{States for \( f_1 \oplus f_2 \) and \( \renamee f' \)}{
(1)
A state \( (A,\afo) \) of graph \( \GG(f) \) \define{completes} a state \( (A^\inc, \afo^\inc) \) if \( A \) is complete,
and removing from \( A \), for every label \( \alpha \), the last tuples of label \( \alpha \) from \( A \) until as many as in \( A^\inc \) remain, results in \( A^\inc \);
	and \( \afo: [|A|] \cup \labset(f) \to [0,\tmax], \) maps \( x \) to
	\( \afo^\inc(x) \), if defined for \( x \), and otherwise to \( \afo^\inc \big( \lab(A[x]) \big) \).
	
	(2)
	Let \( (f_1 \oplus f_2) \) be an \( \ell \)-expression.
	Then, \( \setjoin \) is the family of every pair of states \( \big( (A_1,\afo_1), (A_2, \afo_2) \big) \) that complete the possible incomplete states \( (A_1^\inc, \afo_1^\inc) \) and \( (A_2^\inc, \afo_2^\inc) \) that can be constructed as follows.
	Start with states \( (A_1^\inc, \afo_1^\inc) \), \( (A_2^\inc, \afo_2^\inc) \) where \( A_1^\inc = A_2^\inc = () \) and,
		for every label \( \alpha \), we have \( \afo_i^\inc(\alpha) = \afo(\alpha) \).
	For position \( j \), beginning from \( 1 \) to \( |A| \), add \( A[j] \) to the end of either list \( A_i^\inc \in \{A_1^\inc, A_2^\inc\} \) where possible.
	For position \( j \in [|A|] \), tuple \( A[j] \) is added to list \( A^\inc_i \), and let \( j' \) be the position of \( A[j] \) in \( A^\inc_i \).
		Then, let \( \afo_i^\inc(j') := \afo(j) \).
		
	(3)
	Let \( (A,\afo) \) be a state of \( \GG( \renamee f' ) \).
	Then, \( \setlabel \) is the family of every state \( (A', \afo') \) that completes a state \( (A^\inc,\afo^\inc) \) that can be constructed as follows.
	Re-label \( s \in [0, \tmaxx{\alpha}( f' )]  \) many tuples of \( A \) of label \( \beta \) to \( \alpha \), while at most \( \tmaxx{\beta}(f') \) of label \( \beta \) remain, resulting in \( A^\inc \).
	Let \( \afo^\inc \) be defined as \( \afo \) but where \( \afo^\inc( \alpha) = \afo(\beta) \).
}

\newcommand{\lemmaSplit}{
	Graph \( \GG( f_1 \oplus f_2 ) \) is \( k \)-activating for state \( (A,\afo) \)
		if and only if
		there are states \( \big( (A_1,\afo_1),(A_2,\afo_2) \big) \in \setjoin \) 
		and partition \( k_1 + k_2 = k \) such that,
		for \( i \in \{1,2\} \),
		graph \( \GG(f_i) \) is \( k_i \)-activating for \( (A_i, \afo_i) \).
}
\begin{lemma}
\label{lemma:split}
	\lemmaSplit
\end{lemma}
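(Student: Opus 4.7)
The plan is to show both directions by exploiting that $G(f_1 \oplus f_2)$ contains no edges between $V(f_1)$ and $V(f_2)$, so activation of a vertex $v \in V(f_i)$ depends only on its neighbors in $G(f_i)$. Niceness to subexpressions of $f$ is purely about orderings within $V(f_i)$ (the top operation $\oplus$ is not an edge operation, so the nice condition for $f_1 \oplus f_2$ itself is vacuous), so it will be preserved under restriction to $V(f_i)$ and under any interleaving merge.

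\forward Let $\getorder$ be a $k$-activating global ordering for $(A,\afo)$ on $G(f_1 \oplus f_2)$ that extends $A$ and is nice to every subexpression, with target set $S$. Define $\getorder_i := \getorder\!\upharpoonright_{V(f_i)}$ and $S_i := S \cap V(f_i)$, $k_i := |S_i|$, so $k_1 + k_2 = k$. To construct $(A_i, \afo_i)$, walk through $A$ position by position and send each tuple to $A_1^\inc$ or $A_2^\inc$ according to whether the corresponding vertex $\vertify(j)$ lies in $V(f_1)$ or $V(f_2)$. This is exactly the construction in the definition of $\setjoin$, and it produces incomplete local orderings that agree with $\condense(\getorder_i) \upharpoonright_{\text{first steps}}$. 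Complete each $A_i^\inc$ to $A_i := \text{tuples of } \condense(\getorder_i)$; completion is forced by the label/threshold statistics of $G(f_i)$. For $\afo_i$, copy $\afo$ values at matched positions and inherit the label values $\afo(\alpha)$ for unreferenced slots. The key verification is that $\deacttt{V(f_i)}{\getorder_i}(v) = \deactt(v)$ in the sense that the $\afo_i$-value seen by $v$ equals the $\afo$-value seen by $v$; this follows because the first few $\alpha$-labeled vertices of $\getorder$ lying in $V(f_i)$ are exactly the first few $\alpha$-labeled vertices of $\getorder_i$. Combined with $\incoming{\getorder}{G(f_1\oplus f_2)}(v) = \incoming{\getorder_i}{G(f_i)}(v)$ (no cross-edges), $\getorder_i$ is $k_i$-activating for $(A_i,\afo_i)$, and niceness transfers since the order of vertices of $V(f')$ is unchanged.

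\backward Given $(A_i,\afo_i)$ with witnessing orderings $\getorder_i$ and target sets $S_i$ of size $k_i$, set $S := S_1 \cup S_2$ and merge $\getorder_1, \getorder_2$ into a permutation $\getorder$ of $V(f_1 \oplus f_2)$. The merge is steered by the way the construction of $\setjoin$ split $A$: for each position $j \in [|A|]$, we place the vertex $\condense(\getorder_i, A_i)[j']$ (where $j'$ is the position $j$ got assigned to in $A_i^\inc$) at global position $j$. After the first $|A|$ positions the remaining vertices of $V(f_1)$ and $V(f_2)$ can be interleaved in any manner consistent with $\getorder_1$ and $\getorder_2$, e.g., all of $V(f_1)$ first, then all of $V(f_2)$. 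By construction $\getorder$ extends $A$, and $\getorder\!\upharpoonright_{V(f_i)} = \getorder_i$. Since $G(f_1 \oplus f_2)$ adds no edges across the partition, $\incoming{\getorder}{G(f_1 \oplus f_2)}(v) = \incoming{\getorder_i}{G(f_i)}(v) \geq \thr(v) - \afo_i(\deacttt{V(f_i)}{\getorder_i}(v)) = \thr(v) - \afo(\deactt(v))$, where the last equality is the same index-matching argument as in the forward direction. Hence $\getorder$ is $k$-activating for $(A,\afo)$ with target set $S$ of size $k$. Niceness of $\getorder$ to any subexpression $f'$ of $f_i$ follows from niceness of $\getorder_i$, because the vertices of $V(f')$ have the same relative order in $\getorder$ and in $\getorder_i$.

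The main technical obstacle is bookkeeping for the activation-from-outside: one has to verify that the "where possible" rule in the construction of $\setjoin$, together with the completion of $A_i^\inc$ to $A_i$, exactly matches $\condense(\getorder_i, A_i)$, so that the $\afo_i$-values indexed by positions line up with the $\afo$-values indexed by positions of $A$, and the label-fallback cases agree too. Once this correspondence is laid out, both directions reduce to straightforward application of the definitions.

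\seeFullProof{\ref{appendix:lemma:split}}.
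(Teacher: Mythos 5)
Your overall plan coincides with the paper's: split/merge the orderings across the disjoint union, use the absence of cross-edges to equate $\incoming{\getorder}{f_1 \oplus f_2}(v)$ with $\incoming{\getorder_i}{f_i}(v)$, and match the $\afo$-values through the position/label correspondence. The forward direction is essentially the paper's argument (restrict $\getorder$ to $V(f_i)$, split $A$ by origin of the condensed vertices, complete via the analogue of Lemma~\ref{lemma:extend:A}).

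The backward direction, however, has a genuine gap in the merge you describe. You place the $|A|$ vertices referenced by $A$ at global positions $1,\dots,|A|$ and push all remaining vertices behind them. This contradicts your own claim that $\getorder \! \upharpoonright_{V(f_i)} = \getorder_i$: a vertex $v$ referenced by $A_i$ (say of a label $\alpha$ with few occurrences) can be preceded in $\getorder_i$ by vertices of an abundant label $\gamma$ that are \emph{not} among the first $\tamount{\gamma}$ of label $\gamma$ and hence are not referenced; your merge moves all of these behind $v$. Since such a vertex can be a neighbor of $v$ in $\GG(f_i)$ that was counted in $\incoming{\getorder_i}{f_i}(v)$, the inequality $\incoming{\getorder}{f_1 \oplus f_2}(v) \geq \thr(v) - \afo(\deactt(v))$ can fail for $v$, and the niceness transfer (which you also justify by ``same relative order'') breaks for the same reason. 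The paper avoids this by interleaving in blocks: for $j = 1,\dots,|A|$ it appends, from the part owning the $j$-th referenced vertex, \emph{all} not-yet-placed vertices of that part up to and including that vertex. This keeps each $\getorder_i$ as the restriction of the merged $\getorder$ (so incoming counts and niceness are preserved verbatim), while the per-label counts of $A_1^\inc$ and $A_2^\inc$ summing to $\tamount{\alpha}(f_1 \oplus f_2)$ still force $\condense(\getorder)$ to agree with $A$. With that merge substituted, the rest of your argument goes through.
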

	\seeProof{\ref{appendix:lemma:split}}

\newcommand{\lemmaRelabel}{
	\comment{ 
	Let \( \renamee f' \) be an \( \ell \)-expression, and \( (A,\afo) \) a state of \( \GG( \renamee f') \).
	Then, graph \( \GG( \renamee f' ) \) is \( k \)-activating for state \( (A,\afo) \)
	if and only if
	there  is a state \( (A',\afo') \in \setlabel \) such that
	\( \GG(f') \) is \( k \)-activating for \( (A',\afo') \).
	}
	Graph \( \GG( \renamee f' ) \) is \( k \)-activating for state \( (A,\afo) \)
	if and only if
	there  is a state \( (A',\afo') \in \setlabel \) such that
	\( \GG(f') \) is \( k \)-activating for \( (A',\afo') \)
}
\begin{lemma}
\label{lemma:relabel}
	\lemmaRelabel
\end{lemma}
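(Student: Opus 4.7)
The plan is to exploit that $\GG(\renamee f')$ and $\GG(f')$ share the same underlying graph, thresholds, and neighborhoods; only the labeling per vertex differs, with every vertex of old label $\alpha$ being promoted to new label $\beta$. Accordingly, for both directions I will reuse the same global ordering $\pi$ and merely translate between the states $(A,\afo)$ and $(A',\afo')$ via the construction in the definition of \setlabel.

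For the forward direction, start with a $\pi$ that extends $A$, is $k$-activating for $(A,\afo)$, and nice to every subexpression of $\renamee f'$. Let $A'$ be the complete local ordering of $\GG(f')$ extended by $\pi$, and let $s$ be the number of tuples of $A$ whose corresponding vertex has label $\alpha$ in $f'$. The first step is to show these $s$ vertices are precisely the first $s$ old-label-$\alpha$ vertices in $\pi$: any earlier old-label-$\alpha$ vertex is already new-label-$\beta$ and, being $\pi$-earlier, must also sit in the first $\tamount{\beta}(\renamee f')$ new-label-$\beta$ vertices. From this I obtain $s \leq \tamount{\alpha}(f')$ and $\tamount{\beta}(\renamee f') - s \leq \tamount{\beta}(f')$, so the construction is valid; re-labeling exactly those $s$ tuples of $A$ from $\beta$ to $\alpha$ yields $A^\inc$. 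Removing the last tuples of each label from $A'$ down to the counts of $A^\inc$ returns $A^\inc$, because both $A$ and $A'$ inherit the $\pi$-order on the surviving vertices. Define $\afo'$ by completion; a short case analysis on whether $v$ lies in $A$, only in $A'$, or in neither --- combined with the rule $\afo^\inc(\alpha) = \afo(\beta)$ --- yields $\afo'(\deact{A'}(v)) = \afo(\deactt(v))$ for every vertex $v$. The activation inequality and the target set transfer directly to $\GG(f')$, and niceness to subexpressions of $f'$ is inherited since each is also a subexpression of $\renamee f'$ with unchanged internal labels.

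Conversely, given $(A',\afo') \in \setlabel$ and a $\pi$ that extends $A'$, is $k$-activating for $(A',\afo')$, and is nice to every subexpression of $f'$, I would first argue that $\pi$ extends $A$. Since $\tamount{\beta}(\renamee f') \leq \tmax+1$, the first $\tamount{\beta}(\renamee f')$ new-label-$\beta$ vertices in $\pi$ all lie among the label-$\alpha$ and label-$\beta$ tuples of $A'$; reading off this prefix in $\pi$-order and viewing every label as $\beta$ reconstructs the label-$\beta$ portion of $A$, with the interleaving dictated by the specific $s$ used to build $(A',\afo')$. Tuples of other labels are unaffected by $\renamee$. The same $\afo$-correspondence then shows that $\pi$ is $k$-activating for $(A,\afo)$ in $\GG(\renamee f')$, and niceness at the outer $\renamee$ operation is vacuous since it is not an edge operation.

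The hard part will be the order-theoretic bookkeeping around the ``completes'' relation --- specifically, verifying that the label-$\alpha$ tuples created by re-labeling occupy exactly those positions of $A'$ which survive the ``remove last of each label'' step. This reduces to the prefix property that the old-label-$\alpha$ vertices appearing in $A$ are exactly the $\pi$-earliest old-label-$\alpha$ vertices, and analogously for old-label-$\beta$, which is where the bound $\tamount{\beta}(\renamee f') \leq \tmax+1$ is essential.
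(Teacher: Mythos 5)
Your proposal follows essentially the same route as the paper's own proof: keep the single global ordering for both graphs, use that \( \incoming{\getorder}{f'}(v) = \incoming{\getorder}{\renamee f'}(v) \) because re-labeling adds no edges, and translate between \( (A,\afo) \) and \( (A',\afo') \) via the relabel-then-complete construction together with the correspondence \( \afo(\deactt(v)) = \afo'(\deact{A'}(v)) \). If anything, your handling of the prefix/interleaving bookkeeping for the \emph{completes} relation is more explicit than the paper's, which fixes the positions \( i_1 < \dots < i_s \) of the re-labeled tuples and asserts the match.
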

	\seeProof{\ref{appendix:lemma:relabel}}

Finally, we can show our main theorem, which was stated in the introduction.

\begin{theorem}[Theorem \ref{lemma:main:introduction} restated] 
	\lemmaMain
\end{theorem}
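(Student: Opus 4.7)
The plan is to turn the structural lemmas of this section into a bottom-up dynamic program over the subexpressions of $f$. By Lemma~\ref{lemma:translate:to:state}, the minimum target set size of $G(f)$ equals $\min_{A} k^*(f, A, \nullf)$, where the minimum ranges over complete local orderings $A$ of $G(f)$, and $k^*(g, A, \afo)$ denotes the smallest $k \in \N$ for which $G(g)$ is $k$-activating for $(A, \afo)$ (or $\infty$ if no such $k$ exists). I would tabulate $k^*(g, A, \afo)$ for every subexpression $g$ of $f$ and every state $(A, \afo)$ of $G(g)$, and then return the above minimum at the root.

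The recurrences are read off directly from the three operations. In the base case $g = v(\alpha)$, the only complete local orderings consist of the single tuple $(\alpha, \thr(v))$, and $k^*(g, A, \afo)$ equals $0$ if $\afo$ already meets $\thr(v)$ at this position and $1$ otherwise. For $g = \addd f'$, Lemma~\ref{lemma:adding} gives $k^*(g, A, \afo) = k^*(f', A, \addd\afo)$ provided $A$ is nice to $g$, and $\infty$ otherwise. For $g = f_1 \oplus f_2$, Lemma~\ref{lemma:split} gives
\[
  k^*(g, A, \afo) \;=\; \min \bigl\{ k^*(f_1, A_1, \afo_1) + k^*(f_2, A_2, \afo_2) \;\big|\; \bigl((A_1, \afo_1), (A_2, \afo_2)\bigr) \in \setjoin \bigr\}.
\]
For $g = \renamee f'$, Lemma~\ref{lemma:relabel} gives $k^*(g, A, \afo) = \min\{ k^*(f', A', \afo') \mid (A', \afo') \in \setlabel \}$. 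Correctness of the whole procedure is then an immediate composition of these three lemmas with Lemma~\ref{lemma:translate:to:state} at the root.

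For the running time, I would count states and the work per transition. A local ordering of $G(g)$ is a list of length at most $\ell \cdot t$ over a label--threshold alphabet of size $\ell \cdot t$, giving at most $(\ell t)^{\ell t}$ choices; an activation from outside maps the at most $\ell t + \ell = \ell(t+1)$ positions and labels into $[0, \tmax]$, contributing a factor of $t^{\ell(t+1)}$. The costly transition is the $\oplus$ case, which enumerates the distributions of the at most $\ell t$ tuples of $A$ between the two subgraphs together with the allowed completions of the resulting incomplete states on each side, and the $\renamee$ case is similarly bounded by the ways to split the $\beta$-tuples of $A$ into $\alpha$- and $\beta$-tuples. Folding these counts into one state-times-transition count per subexpression and multiplying by $|f|$ yields, after careful accounting, the claimed $\Oh(\ell^{3\ell t} \cdot t^{\ell(4t+1)} \cdot |f|)$ bound.

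The main obstacle I expect is this last bookkeeping: while the lemmas have already done the structural heavy lifting, matching the specific exponents in the runtime bound requires a tight estimate of $|\setjoin|$ and of the number of completions of an incomplete state, so that the $\oplus$ transition does not exceed $\ell^{3\ell t} \cdot t^{\ell(4t+1)}$ per subexpression. Once that counting is pinned down, the algorithm is just a bottom-up traversal of the $\ell$-expression, and the theorem follows.
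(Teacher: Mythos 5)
Your proposal matches the paper's proof essentially step for step: the same tabulation of the minimum $k$ per subexpression and state, the same base case and the same three recurrences via Lemmas~\ref{lemma:adding}, \ref{lemma:split} and \ref{lemma:relabel}, rooted by Lemma~\ref{lemma:translate:to:state}, with the same state count $(\ell t)^{\ell t}\cdot t^{\ell t+\ell}$. The one piece you leave open --- bounding the transition cost --- is handled in the paper by the crude observation that $|\setjoin|$ and $|\setlabel|$ are at most $(\ell t)^{2\ell t}$ (one may simply guess $A_1,A_2$, respectively $A'$, from scratch, the activations from outside being determined by the completion), which already gives the claimed exponents.
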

\begin{proof} 
	The minimal size of a target set is the minimal \( k \) of all local orderings \( A \) of \( \GG(f) \) such that \( \GG(f) \) is \( k \)-activating for \( (A,\nullf) \),
		as seen in Lemma \ref{lemma:translate:to:state}.

	Our algorithm computes the minimal \( k \) for possibly each subexpression \( f' \) of \( f \) and state \( (A,\afo) \) of \( \GG(f') \), in the fashion of dynamic programming.
	The minimum for a subexpression \( f' \) and state \( (A,\afo) \) of \( \GG(f') \) is remembered for future queries.
	There are at most \( ( \ell t )^{\ell t} \) possible local orderings \( A \) for a subgraph \( \GG(f') \).
	And there are at most \( t^{ \ell t + \ell } \) possible activations from outside \( \afo: [|A|] \cup \labset(f) \to [0,\tmax] \).
	Thus, there are at most \( ( \ell t )^{\ell t} \cdot t^{ \ell t + \ell } \) different states for a fixed subexpression.
	Further, every computation is the minimum of at most \( (\ell t )^{2 \ell t} \) entries (an upper bound is guessing \( A_1,A_2 \) respectively \( A' \) from scratch),
		and the minimum can be found in linear time.
	Therefore, the algorithm runs in time \( \Oh( ( \ell t )^{\ell t} \cdot t^{ \ell t + \ell } \cdot (\ell t )^{2 \ell t} ) \cdot |f| = \Oh( \ell^{3 \ell t} \cdot t^{\ell (4t+1) } \cdot |f| ) \).
	If \( (A,\afo) \) is not a correct state for \( \GG(f') \), set its minimum to \( \infty \).
	
	If \( f \) contains only one operation, then \( f = v(\alpha) \) 
	and the only possible global ordering is \( \getorder: \{v\} \to \{1\} \).
	Graph \( \GG(f) \) is at least \( 1 \)-activating, and possibly \( 0 \)-activating if \( \thr(v) \geq \thr(v) - \afo(1) \).
		Answer accordingly in time \( \Oh(1) \).

	Otherwise, if \( f \) consists of more than one operation, we have either of the recursive cases that \( f \) is \( \addd f' \), \( f_1 \oplus f_2 \) or \( \renamee f' \). 
	According to Lemma \ref{lemma:split} and \ref{lemma:relabel} respectively,
		graph \( \GG(f_1 \oplus f_2) \) is \( k \)-activating for state \( (A, \afo) \)  if and only if
			there is a pair of states \( \big( (A_1,\afo_1),(A_2,\afo_2) \big) \in \setjoin \) and partition \( k_1 + k_2 = k \) such that, for \( i \in \{1,2\} \), the graph \( \GG(f_i) \) is \( k_i \)-activating for \( (A_i,\afo_i) \);
		and graph \( \GG(f\renamee f') \) is \( k \)-activating if and only if
			there is a state \( (A',\afo') \in \setlabel \) such that \( \GG(f') \) is  \( k \)-activating for \( (A',\afo') \).
	Therefore, in those two cases we can recursively obtain a minimum size of a target set by querying for the according subgraphs \( \GG(f'), \GG(f_1), \GG(f_2) \) and states \( \big( (A_1,\afo_1), \allowbreak (A_2,\afo_2) \big) \in \setjoin \) and \( (A',\afo') \in \setlabel \), respectively.
	In case of \( f = f_1 \oplus f_2 \) the minimum size of a target set is the minimum of the sum of the minimum sizes for \( f_1 \) and \( f_2 \).
	For \( f = \renamee f \) the minimum size is equal to the minimum for \( f' \).
	
	According to Lemma \ref{lemma:adding}, graph \( \GG(\addd f') \) is \( k \)-activating for state \( (A,\afo) \)
		if and only if \( A \) is nice \( \addd f' \) and graph \( \GG(\addd f') \) is \( k \)-activating for state \( (A, \addd \afo) \).
	Thus, in case of that \( A \) is not nice to \( f \) we can discard the current computation for a minimal size of a target set for the graph \( \GG(\addd f') \) and state \( (A, \afo) \).
	Otherwise, the minimum size of a target set is equal to the minimum size of subgraph \( \GG(f) \) with state \( (A,\addd \afo) \).
\end{proof}

\section{Conclusion}

\label{section:not:main}

\label{section:conclusion}

In this work, we gave an \fpt-algorithm for \problemTSS for the combined parameters clique-width and \maximum threshold.
This result generalizes all previous \fpt-membership results of \problemTSS with constant thresholds.
It would be interesting to explore the whole dichotomy of constant \problemTSS for common structural parameters.
Is there a different dichotomy when the \maximum threshold is a parameter and not a constant?

\bibliographystyle{plain}
\bibliography{lit}

\arxiv{
\newpage

\appendix

\section{Omitted Proofs}

\subsection{Proof of Lemma \ref{lemma:fix}}
\label{appendix:lemma:fix}

\begin{lemma}[Lemma \ref{lemma:fix} restated]
	\lemmaFix
\end{lemma}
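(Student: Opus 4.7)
The strategy is to iteratively repair niceness violations by a local move operation, processing the edge-operation subexpressions of $f$ in top-down order. Throughout the process the same target set $S$ is retained, so $k$-activation is preserved in each step.

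First, for a subexpression $g = \addd f'$ to which $\getorder$ is not nice, I would assume without loss of generality that $\getorder(v_\alpha[\tmax+1]) =: i < \getorder(v_\beta[\tmaxx{\beta}])$ (the symmetric inequality is handled analogously). I would then produce a new ordering $\getorder^{\ast}$ by relocating every $v_\beta[j]$ with $j \leq \tmaxx{\beta}$ and $\getorder(v_\beta[j]) > i$ into the block of positions immediately preceding $v_\alpha[\tmax+1]$, keeping the relative order of all other vertices unchanged. This explicitly forces $v_\alpha[\tmax+1]$ to come after all the first $\tmaxx{\beta}$ vertices of label $\beta$ within $V(f')$.

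Next I would verify that $\getorder^{\ast}$ is still $k$-activating with the same $S$. For any unmoved vertex $w$, the set of neighbors ordered before $w$ can only grow, so the inequality $\incoming{\getorder^{\ast}}{\GG}(w) \geq \thr(w)$ is inherited from $\getorder$. For a moved vertex $u = v_\beta[j]$: if $u \in S$ there is nothing to check, and otherwise the $\tmax$ vertices $v_\alpha[1], \ldots, v_\alpha[\tmax]$ lie at positions strictly less than $i$ and remain strictly before $u$ in $\getorder^{\ast}$. Since the operation $\addd$ at $g$ makes each of these a neighbor of $u$ in $\GG$, and $\thr(u) \leq \tmax$, the activation condition for $u$ is satisfied.

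Finally, I would iterate this repair top-down through the edge-operation subexpressions of $f$. The main obstacle, and the key technical point, is to verify that repairing an inner subexpression $g' = \add{\alpha',\beta'} f''$ does not reintroduce a violation at an already-repaired outer subexpression $g = \addd f'$ with $V(f'') \subseteq V(f')$. I would argue this by observing that the inner repair only permutes vertices of $V(f'')$ within a prefix of the ordering ending at (essentially) position $\getorder(v_{\alpha'}[\tmax+1])$; a case analysis on the labels carried in $g$ by the moved vertices (tracked through the relabelings on the path from $g'$ up to $g$) then shows that both niceness inequalities at $g$ continue to hold, since the relative order of the first $\tmax+1$ vertices of labels $\alpha$ and $\beta$ in $V(f')$ is preserved or only improved. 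Once this invariance is established, the top-down process terminates after one pass and produces the desired $\getorder'$ that is nice to every subexpression of $f$ and still $k$-activating.
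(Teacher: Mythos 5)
Your proposal follows essentially the same route as the paper's proof: the same local repair (moving the not-yet-ordered first $\tmaxx{\beta}$ vertices of label $\beta$ to just before position $\getorder(v_\alpha[\tmax+1])$), the same argument that delayed vertices only gain predecessors while the forwarded $\beta$-vertices are activated by the $\tmax$ earlier $\alpha$-neighbors, and the same top-down invariance claim that inner repairs do not disturb already-fixed outer edge operations because the moved vertices share a common label there. The proposal is correct and matches the paper's argument in both structure and level of detail.
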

\begin{proof}
	Let \( \tmax := \tmax(f) \).
	We modify \( \getorder \) such that it is nice for \( f \) while still being \( k \)-activating for graph \( \GG(f) \).
	Let \( v_\alpha[1], v_\alpha[2], \dots \in V(f) \) and \( v_\beta[1], v_\beta[2], \dots \in V(f) \) be the vertices of label \( \alpha \) respectively label \( \beta \) ordered ascending according to \( \getorder \).
	Top-down for every subexpression \( \addd f \) we assure that (1) \( \getorder( {v_\alpha}[\tmax \! + \! 1] ) > \getorder( {v_{\beta}}[\tmaxx{\beta}] ) \), if defined, and vice-versa that (2) \( \getorder( {v_\beta}[\tmax+1] ) > \getorder( {v_{\alpha}}[\tmaxx{\alpha}] ) \), if defined.
	We begin to show how to locally fix such a violation.
	
	As the two conditions (1) and (2) are symmetric and not both can be false for the same subexpression, it suffices to consider that (\(\neg\)1) \( \getorder( {v_\alpha}[\tmax+1] ) =: i < \getorder( {v_{\beta}}[\tmaxx{\beta}] ) \).
	In that case, move the ordering of the at step \( i \) not yet occurred vertices of \( v_\beta[1],\dots, v_\beta[\tmaxx{\beta}] \) between position \( i-1 \) and \( i \).
	Let there be \( j \in \N \) many not yet occurred vertices.
	Then, our modification of \( \getorder \), puts the vetices \( v_\beta[\tmaxx{\beta} - (j \! - \! 1) ],\dots, v_\beta[\tmaxx{\beta}] \) to position \( i, \dots, i+ (j \! - \! 1) \), and delays the ordering of the following vertices \( \vertify(i), \vertify(i\!+\!1), \dots \) by \( j \) steps.
	
	In the following, we show that this modificated \( \getorder \) is still \( k \)-activating for graph \( \GG(f) \).
	Further, we observe that such a modification does not introduce a violation for added edges by an outer operation, those we already visited.
	Therefore, by recursively visiting every edge operation from top-down, we alter \( \getorder \) such that it is nice to \( f \),
		which proves the existance of a global ordering \( \getorder' \) that is activating for \( \GG(f) \) and is nice to \( f \).
	
	We claim that by this local modification, the altered global ordering \( \getorder' \) is still activating.
	For every vertex \( v \in V(f) \) that is delayed, we have that all neighbors ordered previously are preserved and assure the activation of \( v \).
	It remains to consider the vertices \( v_\beta[\tmax - (j \! - \! 1)],\dots, v_\beta[\tmax] \) whose position was shifted forward to \( i = \getorder( {v_\alpha}[\tmax\!+\!1] ) \) and following.
	However, as the ordering for positions \( 1, \dots, (i \! - \! 1) \) is unchanged, the vertices \( v_\alpha[1], \dots, v_\alpha[\tmax] \) still are ordered within positions \( 1, \dots, (i \! - \! 1) \).
	This means, that every forward shifted vertex of \( v_\beta[\tmax - (j \! - \! 1)],\dots, v_\beta[\tmax] \) still has at least \( \tmax \) many neighbors ordered before.
	As the \maximum threshold is \( \tmax \), it can get active.
	
	We claim that this modification does not cause violations at already seen expressions \( (\addda \dots \addd f') \) in our top-down approach.
		That means \( \getorder( v_{\alpha'}[\tmax+1] ) > \getorder( {v_{\beta'}}[\tmaxx{\beta'}] ) \) and \( \getorder( v_{\beta'}[\tmax+1] ) > \getorder( {v_{\alpha'}}[\tmaxx{\alpha'}] ) \) is still true. 
	Only \( \beta \) labeled vertices are moved forward, but not the potentially violating vertex \( {v_\beta}[\tmax \! + \! 1] \).
	Thus, especially the condition \( \getorder( {v_\beta}[\tmax \! + \! 1] ) > \getorder( {v_{\alpha}}[\tmaxx{\alpha}] ) \) is still true.
	Every outer operation with a current subexpression \( (\addda \dots \addd f') \) from our top-down approach, adds edges between vertex sets \( V_{\alpha'} \) of label \( \alpha' \) and \( V_{\beta'} \) of label \( \beta' \).
		Potentially, \( \beta \) has been re-labeled to \( \beta' \).
	However, as our \( v_{\beta}[1], v_\beta[2], \dots \) vertices from the nested expression \( \addd f' \) have pairwise equal label in every outer subexpression, we have that \( \{ v_{\beta}[1], v_\beta[2], \dots \} \subseteq V_{\beta'} \).
	That means, also for \( \addda \) we do not cause that \( \getorder'( {v_{\beta'}}[\tmax \! + \! 1] ) > \getorder'( {v_{\alpha'}}[\tmaxx{\alpha'}] ) \) as desired.
\end{proof}

\subsection{Proof of Lemma \ref{lemma:key}}
\label{appendix:lemma:key}

\begin{lemma}[Lemma \ref{lemma:key} restated]
	\lemmaKey
\end{lemma}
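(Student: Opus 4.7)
The plan is to reduce the lemma to showing $\min\{\tmax, \adde(v)\} = \min\{\tmax, \addvalue{\deactt(v)}\}$; adding the common term $\afo(\deactt(v)) \in [0, \tmax]$ inside the outer cap preserves equality. When $\lab(v) \notin \{\alpha, \beta\}$, both sides are trivially $0$: no added edge touches $v$, and $\deactt(v)$ forms no $\{\alpha, \beta\}$-pair with any entry of $A$. So one may assume $\lab(v) = \alpha$ without loss of generality.

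The central observation is that $\condense$ preserves relative $\getorder$-order, so the $\beta$-tuples of $A$ correspond exactly, and in the same order, to the first $\tamount{\beta}$ vertices of label $\beta$ in $\getorder$. Consequently $\addvalue{\deactt(v)}$ equals the number of those first $\tamount{\beta}$ vertices of label $\beta$ ordered before $v$ (with the reading that $\deactt(v) = \alpha$ produces the full count $\tamount{\beta}$), while $\adde(v)$ counts \emph{all} $\beta$-vertices ordered before $v$. These two counts agree unless some $v_\beta[j]$ with $j > \tamount{\beta}$ is ordered before $v$.

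I then split into two cases. In Case A, $v$ is among the first $\tamount{\alpha}$ vertices of label $\alpha$, so $\deactt(v) = i$ is the position of $v$ in $A$. If no $\beta$-vertex past the first $\tamount{\beta}$ precedes $v$, the two counts coincide directly. Otherwise $\tamount{\beta} = \tmax + 1$ and $v$ sits after $v_\beta[\tmax + 1]$, so $\adde(v) \geq \tmax + 1$. The niceness inequality $\getorder(v_\beta[\tmax + 1]) > \getorder(v_\alpha[\tmaxx{\alpha}])$ forces $v = v_\alpha[k]$ with $k \geq \tmax + 1$, so its $A$-position $i$ lies at or past the $A$-position of $v_\alpha[\tmax + 1]$; reading the other niceness inequality on $A$ shows that this position is past all $\tmaxx{\beta}$ first $\beta$-positions of $A$, giving $\addvalue{i} \geq \tmaxx{\beta} = \tmax$, so both capped values equal $\tmax$. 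In Case B, $v$ is not referenced in $A$, so $\deactt(v) = \alpha$ and $\addvalue{\alpha} = \tamount{\beta}$; moreover $v$ sits after $v_\alpha[\tmax + 1]$, and the niceness inequality $\getorder(v_\alpha[\tmax + 1]) > \getorder(v_\beta[\tmaxx{\beta}])$ yields $\adde(v) \geq \tmaxx{\beta}$. A short split on whether there are at least $\tmax$ vertices of label $\beta$ confirms that both capped quantities equal $\tmaxx{\beta}$.

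The main obstacle is the bookkeeping around the two related but distinct caps, $\tamount{\cdot}$ at $\tmax + 1$ versus $\tmaxx{\cdot}$ at $\tmax$, together with the degenerate cases in which one of the labels has fewer than $\tmax$ vertices in total. Niceness is precisely the bridge that transfers positional information between $A$ and $\getorder$, and the argument is arranged so that each case invokes it exactly once.
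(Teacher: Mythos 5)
Your proof is correct and follows essentially the same route as the paper's: reduce to comparing $\adde(v)$ with $\addvalue{\deactt(v)}$ under the cap at $\tmax$, use the order-preservation of $\condense$ for vertices referenced by $A$, and invoke niceness exactly for the vertices (or boundary positions) that $A$ does not pin down. If anything, your version is slightly more careful than the paper's, which asserts the uncapped identity $\min\{\tmax,\adde(v)\}=\addvalue{\deactt(v)}$ even though $\addvalue{\deactt(v)}$ can reach $\tmax+1$ when $\deactt(v)=\alpha$; keeping the cap on both sides, as you do, is the cleaner statement.
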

\begin{proof}
	For simplicity, let \( V := V(f) \). 
	For vertices \( v \in V \) of label \( \gamma \neq \{\alpha,\beta\} \), no edges are added and we have that \( \adde(v) = 0 = \addvalue{ \deactt(v) } \),
		which directly implies the statement.
	Otherwise \( v \) has either label \( \alpha \) or \( \beta \).
	As the two cases of are symmetric, let us only consider the case that \( v \) has label \( \alpha \).	
	Then, the terms \( \adde(v) \) and \( \addvalue{ \deactt(v) } \) simplify to
	\begin{align*}
		&\adde(v) \; = \; | \bigset{ u \in V }{ \getorder(u) < \getorder(v), \; \lab(u)= \beta } |,
		\\&\addvalue{ \deactt(v) } \; = \; |\bigset{ x \in [|A|] }{ x< \deactt(v), \; \lab(x) = \beta }|
		.
	\end{align*}
	We show that \( \min\{ \tmax, \; \adde(v) \} = \addvalue{ \deactt(v) } \), which implies the statement.
	\begin{align*}
		 \min\{ \tmax, \; \adde(v) \}
		\; = \; \min\!\big\{ \tmax, \; |\bigset{ u \in V }{ \getorder(u) < \getorder(v), \; \lab(u) = \beta }| \big\}
	\end{align*}
	as the first \( \tmax \) many vertices of label \( \beta \) are among \( \condense(\getorder) =: a_1, \dots, a_{|A|} \),
	\begin{align*}
	\label{test}
		=& \; \min\!\big\{ \tmax, \; |\bigset{ x \in [|A|] }{ \getorder( a_x ) < \getorder(v), \; \lab(x) = \beta }| \;  \big\} 
		\\
		=& \; \min\!\big\{ \tmax, \; |\bigset{ x \in [|A|] }{
				x<\deactt(v), \; \lab(x) = \beta }| \;  \big\} \;
				 =  \; \addvalue{ \deactt(v) } 
				,
	\end{align*}
	For the second las equality, we show in the following that, the number of positions \( x \in [|A|] \) of label \( \beta \), where \( \getorder( a_x ) < \getorder(v) \) is equal to the number of positions where \( x<\deactt(v) \), given that we cap the numbers by \( \tmax \).
	This, then finishes our proof.
	
	We distinguish whether \( v \) occurs in \( a_1, \dots, a_{|A|} \),  formally if there is a \( y \in [|A|] \) such that \( v = a_y \).
	Assume that \( v = a_i \) for some \( i \in [|A|] \).
	Then, we have that \( \deactt(v) = y \).
	Thus, for every \( x \in [|A|]  \), we have \( \getorder(a_x) < \getorder(v) = \getorder(a_x) \) if and only if \( x < \deactt(v) = y \).
	Otherwise, for that \( v \) does not occur in \( a_1, \dots, a_{|A|} \), we proof the forward an backward containment in the following.
	
	(\( \supseteq \))
	Assume that \( v \), of label \( \alpha \), does not occur in \( a_1, \dots, a_{|A|} \).
	Then, there are at least \( \tmax \) many vertices of label \( \alpha \) ordered before \( v \) in the current subgraph \( \GG(f) \).
	Let \( v_\alpha[1], v_\alpha[2] \dots \) be the ordering of \( \alpha \) labeled vertices in \( \GG(f) \),
		analogously let \( v_\beta[1], v_\beta[2] \dots \) be the ordering of \( \beta \) labeled vertices.
	Because the local ordering \( A \) is nice to \( f \), we have that \( \getorder(v) \geq \getorder( {v_\alpha}[\tmax+1] ) > \getorder( {v_{\beta}}[\tmax] ) \).
	Thus, there are at least \( \tmax \) many  \( x \in [|A|] \) of label \( \beta \), fo which vertex \( a_x \) is ordered before \( v \).
	
	(\( \subseteq \))
	Having \( x < \deactt(v) \) for all  \( x \in [|A|] \) implies that \( \deactt(v) = \alpha \) and that \( v \) is not among the first active \( \alpha \) labeled vertices \( v_\alpha[1], \dots, v_\alpha[\tmax] \).
	Because global ordering \( \getorder \) is nice to \( f \), we have that \( \getorder(v) \geq \getorder( {v_\beta}[\tmax+1] ) > \getorder( {v_{\alpha}}[\tmax] ) \).
	This means that before the position \( \getorder( v ) \) there are at least the previous ordered \( \tmax \) many neighbors \( v_{\alpha}[1], \dots, v_{\alpha}[\tmax] \).
\end{proof}

\subsection{Proof of Lemma \ref{lemma:adding}}

\label{appenidx:lemma:adding}
\begin{lemma}[Lemma \ref{lemma:adding} restated]
	\lemmaAdding
\end{lemma}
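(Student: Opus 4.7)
The plan is to use irredundancy of the given $\ell$-expression together with the key identity of Lemma \ref{lemma:key} in order to translate the activation condition for $G(\eta_{\alpha,\beta}f')$ (with state $(A,\afo)$) into that for $G(f')$ (with state $(A,\eta_{\alpha,\beta}\afo)$), preserving the ordering and target set in both directions. The starting observation is that by irredundancy, for every $v \in V(f')$ and every global ordering $\getorder$,
\[ \incoming{\getorder}{\eta_{\alpha,\beta} f'}(v) \; = \; \incoming{\getorder}{f'}(v) + \adde(v), \]
since all edges between $\alpha$- and $\beta$-labeled vertices are new to $G(\eta_{\alpha,\beta}f')$. Combined with Lemma \ref{lemma:key}, this will let me convert $\afo$-based inequalities into $(\eta_{\alpha,\beta}\afo)$-based ones and vice versa.

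For the forward direction, suppose $\getorder$ extends $A$, is $k$-activating for $(A,\afo)$ with target set $S$, and is nice to every subexpression of $\eta_{\alpha,\beta}f'$. Then $\getorder$ is in particular nice to $\eta_{\alpha,\beta}f'$, and, since $A$ is the list of label/threshold tuples of $\condense(\getorder)$, the definition gives that $A$ itself is nice to $\eta_{\alpha,\beta}f'$. I would then keep the same $\getorder$ and the same target set $S$ and verify the activation condition in $G(f')$ with $(A,\eta_{\alpha,\beta}\afo)$: for $v \in V(f')\setminus S$, the hypothesis yields $\incoming{\getorder}{f'}(v) \geq \thr(v)-\afo(\deactt(v))-\adde(v)$, so using Lemma \ref{lemma:key} on the two cases whether $\afo(\deactt(v))+\adde(v)$ is below or above $\tmax$ (and using $\thr(v)\le\tmax$ in the latter case so the right-hand side becomes non-positive) gives $\incoming{\getorder}{f'}(v) \geq \thr(v)-(\eta_{\alpha,\beta}\afo)(\deactt(v))$. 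Finally, $\getorder$ is automatically nice to every subexpression of $f'$ because every such subexpression is also a subexpression of $\eta_{\alpha,\beta}f'$.

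For the backward direction, assume $A$ is nice to $\eta_{\alpha,\beta}f'$ and pick a global ordering $\getorder$ extending $A$, $k$-activating for $(A,\eta_{\alpha,\beta}\afo)$ in $G(f')$, and nice to every subexpression of $f'$. Since $A$ is nice to $\eta_{\alpha,\beta}f'$, so is $\getorder$ (again by the definition of niceness); hence $\getorder$ is nice to every subexpression of $\eta_{\alpha,\beta}f'$. To show $\getorder$ witnesses $k$-activation for $(A,\afo)$ in $G(\eta_{\alpha,\beta}f')$ with the same $S$, I would start from $\incoming{\getorder}{\eta_{\alpha,\beta}f'}(v) = \incoming{\getorder}{f'}(v)+\adde(v)$, apply the hypothesis and Lemma \ref{lemma:key} to rewrite the right-hand side as $\thr(v)-\min\{\tmax,\afo(\deactt(v))+\adde(v)\}+\adde(v)$, and then do a case split: if $\afo(\deactt(v))+\adde(v)\le\tmax$ the expression simplifies directly to $\thr(v)-\afo(\deactt(v))$; otherwise $\adde(v)\ge\tmax-\afo(\deactt(v))$ gives the desired inequality with slack.

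The main subtlety, and where I expect most of the care to go, is the capping at $\tmax$ in Lemma \ref{lemma:key}: the identity $\min\{\tmax,\afo(\deactt(v))+\adde(v)\} = (\eta_{\alpha,\beta}\afo)(\deactt(v))$ can strictly lose information about $\adde(v)$, and the two directions of the equivalence must each cover both cases of the $\min$. This is clean precisely because $\thr(v) \le \tmax$, so whenever capping takes effect the activation from outside already exceeds the threshold and the corresponding inequality is automatic. The niceness condition is what makes Lemma \ref{lemma:key} applicable in the first place, which is why it must appear explicitly in the statement and must be preserved (forward) or imposed (backward) across the step.
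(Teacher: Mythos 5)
Your proposal is correct and follows essentially the same route as the paper: both directions reuse the same global ordering and target set, rely on irredundancy for the identity $\incoming{\getorder}{\addd f'}(v) = \incoming{\getorder}{f'}(v) + \adde(v)$, and invoke Lemma \ref{lemma:key} to relate $\afo$ to $\addd\afo$, with the capping at $\tmax$ handled via $\thr(v)\le\tmax$ (the paper compresses your case split into a single $\max$/$\min$ inequality chain, but the content is identical).
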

\begin{proof}
	As mentioned in the preliminaries, we assume that the \( \ell \)-expression \( \addd f' \) is irredundant.
	That means, every edge between vertices of label \( \alpha \) and \( \beta \) is new to \( \GG(f') \) such that
			\( \incoming{\getorder}{\addd f'}(v) \; = \; \incoming{\getorder}{f'}(v) + \adde(v) \).
	
	\forward
	Let \( \GG(\addd f') \) be \( k \)-activating for state \( (A,\afo) \).
	That is there is a global ordering \( \getorder \) that extends \( A \), is \( k \)-activating for state \( (A,\afo) \) and is nice to every subexpression of \( \addd' f \).
	As \( \getorder \) is nice to every subexpression of \( \addd f \), especially \( \getorder \) is nice to every subexpression of \( f' \).
	Moreover, \( \getorder \) is also nice to \( f \) and \( A \) extends \( \getorder \) which implies that \( A \) is nice to \( \addd f' \) and we can apply Lemma \ref{lemma:key} on \( \getorder \) and \( A \).
	Since \( V(f') \) and \( V(\addd f') \) have equivalent labeling, \( \getorder \) also extends \( A \) for graph \( \GG(f') \).
	It remains to show that \( \getorder \) is \( k \)-activating for state \( (A, \addd \afo) \) on graph \( \GG(f') \).
	For every vertex \( v \in V(f') \) but for \( k \) exceptions, as \( \thr(v) \in [0,\tmax] \), we have that 
	\begin{align*}
		 \incoming{\getorder}{f'}(v) \; \geq \; & \max\big\{ 0, \; \incoming{\getorder}{\addd f'}(v) - \adde(v) \big\} \\
		 = \; & \max\big\{ 0, \; \thr(v) - \afoa{ v } - \adde(v) \big\} \\
		 \geq \; & \max\big\{ \thr(v)- \tmax, \; \thr(v) - \afoa{ v } - \adde(v) \big\} \\
		 \geq \; & \thr(v) - \min\big\{ \tmax, \; \afoa{ v } + \adde(v) \big\} \\
		 \overset{\text{L.~\ref{lemma:key}}}{=} & \thr(v) - (\addd\afo)(\deactt(v)) .
	\end{align*}
	Thus, graph \( \GG(f') \) has global ordering \( \getorder \), that extends \( A \), is \( k \)-activating for \( (A,\addd \afo) \) and thatis nice for every subexpression of \( f' \).
	Therefore, graph \( \GG(f') \) is  \( k \)-activating for state \( (A,\addd \afo) \).
	As seen before \( A \) is nice to \( \addd f \).
	
	\backward
	Let \( \GG(f') \) be \( k \)-activating for state \( (A,\addd\afo) \) and \( A \) be nice to \( \addd f \).
	The former implies that \( \GG(f') \) has global ordering \( \getorder \) that extends \( A \), is \( k \)-activating for \( (A,\addd\afo) \) and that is nice to every subexpression of \( f' \).
	As \( A \) be nice to \( \addd f \), we can apply Lemma \ref{lemma:key} on \( \getorder \) and \( A \).
	Because \( \getorder \) extends \( A \), which is nice to \( \addd f' \), also \( \getorder \) is nice to \( \addd f' \).
		Hence, \( \getorder \) is nice to all subexpressions of \( \addd f' \).
	Since the vertices \( V(f') \) and \( V(\addd f') \) have equivalent labels, \( \getorder \) also extends \( A \) for graph \( \GG(\addd f') \).
	It remains to show hat \( \getorder \) is \( k \)-activating for \( (A,\addd\afo) \) on graph \( \GG(\addd f') \).
	For every vertex \( v \in V(\addd f') \) but for \( k \) exceptions, we have that 
	\begin{align*}
		 \incoming{\getorder}{\addd f'}(v) \; = \; & \incoming{\getorder}{f'}(v) + \adde(v) \\
		 \geq \; & \thr(v) - (\addd\afo)(\deactt(v)) + \adde(v) \\
		  \overset{\text{L.~\ref{lemma:key}}}{=} & \thr(v) - \min\big\{ \tmax, \; \afoa{ v } + \adde(v) \big\} + \adde(v) \\
		 \geq \; & \thr(v) - \big( \afoa{ v } + \adde(v) \big) + \adde(v) \\
		 = \; & \thr(v) - \afoa{v} ,
	\end{align*}
	Thus, graph \( \GG(\addd f') \) has global ordering \( \getorder \), that extends \( A \), is \( k \)-activating for \( (A, \afo) \) and nice for every subexpression of \( \addd f' \).
	Therefore, graph \( \GG(\addd f') \) is  \( k \)-activating for state \( (A,\afo) \).
\end{proof}

\subsection{Proof of Lemma \ref{lemma:translate:to:state}, \ref{lemma:split} and \ref{lemma:relabel}}

We first introduce a tool to complete states.

\begin{lemma}
\label{lemma:extend:A}
	Let \( f \) be an \( \ell \)-expression, and \( (A^\inc,\afo^\inc) \) a possibly not complete state of \( \GG(f) \).
	Let \( \GG(f) \) be \( k \)-activating for state \( (A^\inc,\afo^\inc) \).
	Then, there is state \( (A,\afo) \) of \( \GG(f) \) completing \( (A^\inc,\afo^\inc) \) such that \( \GG(f) \) is \( k \)-activating for \( (A,\afo) \).
\end{lemma}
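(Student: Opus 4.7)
The plan is to unpack the hypothesis: $\GG(f)$ being $k$-activating for $(A^\inc, \afo^\inc)$ supplies a global ordering $\getorder$ that extends $A^\inc$, is $k$-activating for $(A^\inc, \afo^\inc)$, and is nice to every subexpression of $f$. I will use $\getorder$ itself to read off a canonical completion. Define $A$ to be the list of tuples of label and threshold of $\condense(\getorder)$, so $A$ is complete by construction. Since $\getorder$ extends $A^\inc$, the list $\condense(\getorder, A^\inc)$ has the same label--threshold tuples as $A^\inc$; consequently, per-label truncation of $A$ down to the per-label counts of $A^\inc$ returns exactly $A^\inc$. This verifies the list part of the completion condition and yields a canonical correspondence $\phi$ between the positions of $A^\inc$ and the surviving positions of $A$, matching the $\getorder$-order within each label.

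Next I define $\afo \colon [|A|] \cup \labset(f) \to [0,\tmax]$ along this correspondence. For each position $y = \phi(x)$ of $A$ coming from a position $x$ of $A^\inc$, set $\afo(y) := \afo^\inc(x)$. For each position $y \in [|A|]$ outside the image of $\phi$, set $\afo(y) := \afo^\inc(\lab(A[y]))$. For each label $\gamma \in \labset(f)$, set $\afo(\gamma) := \afo^\inc(\gamma)$. Then $(A, \afo)$ completes $(A^\inc, \afo^\inc)$ in the sense of the completion definition.

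It remains to check that the same $\getorder$ witnesses $k$-activation of $\GG(f)$ for $(A, \afo)$. Niceness to every subexpression of $f$ is a property of $\getorder$ alone and is preserved; $\getorder$ extends $A$ by construction. The crucial step is to compare the activation-from-outside values assigned by the two states vertex by vertex. A three-way case split handles this: if a vertex $v$ is referenced by $A^\inc$ at position $x$, then under $A$ it is referenced at $\phi(x)$ and $\afo(\phi(x)) = \afo^\inc(x)$; if $v$ is referenced by $A$ but not by $A^\inc$, then the old state assigns $\afo^\inc(\lab(v))$ and the new state assigns $\afo(y) = \afo^\inc(\lab(v))$ at its fresh position $y$; if $v$ is referenced by neither, both states assign $\afo^\inc(\lab(v))$. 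Hence for every vertex the inequality $\incoming{\getorder}{\GG(f)}(v) \geq \thr(v) - \afo(\deactt(v))$ is inherited from the old state, with the same target set $S$.

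The main obstacle is the bookkeeping around the correspondence $\phi$: making it precise enough that the completion clause on $\afo$ is satisfied (so every $A$-position either corresponds to an $A^\inc$-position or is fresh, and the label of the tuple determines the value in the fresh case), and checking that this is exactly the correspondence under which $\deactt$ behaves compatibly between the two states. Once that alignment is in place, $\getorder$ transfers as a witness essentially verbatim.
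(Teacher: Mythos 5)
Your proposal is correct and follows essentially the same route as the paper: take the witnessing global ordering $\getorder$, let $A$ be the label--threshold tuples of $\condense(\getorder)$, define $\afo$ via the completion rule, and observe that $\getorder$ transfers as a witness because $\afo(\deactt(v))=\afo^\inc(\deacttt{A^\inc}{\getorder}(v))$ for every vertex. Your explicit three-way case split on how each vertex is referenced is in fact more careful than the paper's proof, which asserts the transfer without spelling out that correspondence.
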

\begin{proof}
	Let \( \GG := \GG(f) \) have global ordering \( \getorder: V(\GG) \to [|V(\GG)|] \) that extends \( A^\inc \), is \( k \)-activating for \( (A^\inc,\afo^\inc) \) and is nice to every subexpression of \( f \).
	Consider the ordered list of vertices and \( \vertify(1), \dots, \vertify(|V(\GG)|) \), which we underline or mark with a star as follows.
	For every label \( \alpha \), underline the first \( \tamount{\alpha}(\GG) \) occurrences of \( \alpha \) labeled vertices.
	Then, for every label \( \alpha \), mark the first \( |\set{ i }{ \lab(A^\inc[i]) = \alpha }| \) occurrences of \( \alpha \) labeled vertices with a star.
	Since \( |\set{ i }{ \lab(A^\inc[i]) = \alpha }| \leq \tamount{\alpha}(\GG) \), every vertex marked with a star is also underlined.
	Further, the list of underlined vertices is equal to \( \condense(\getorder) \) while the list of vertices marked with a star is equal to \( \condense(\getorder, A^\inc) \).
	Let \( A \) be the list of tuples of label and threshold of \( \condense(\getorder) \).
	Then, \( A \) is a complete local ordering and extends \( \getorder \).
	Further, by deleting, for every label \( \alpha \), the last occurring tuples of label \( \alpha \) until as many as \( |\set{ i }{ \lab(A^\inc[i]) = \alpha }| \) of label \( \alpha \) remain,
		we remove the tuples of that are underlined but without a star.
	Therefore, \( A \) extends \( A^\inc \).
	Let \( \afo \) uniquely be such that \( (A,\afo) \) completes \( (A^\inc, \afo^\inc) \).
	Then, \( \getorder \) extends \( A^\inc \) and is \( k \)-activating for \( (A,\afo) \) and is nice to every subexpression of \( f \).
	Thus, \( \GG(f) \) is \( k \)-activating for \( (A,\afo) \).
\end{proof}

	\label{appendix:lemma:translate:to:state}
\begin{lemma}[Lemma \ref{lemma:translate:to:state} restated]
		\lemmaTranslateToState
\end{lemma}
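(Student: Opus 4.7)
The plan is to prove the two directions of the equivalence separately, with the forward direction leaning on Lemma \ref{lemma:fix} to turn an arbitrary activating ordering into one that is nice to every subexpression, and the backward direction exploiting the fact that setting \( \afo = \nullf \) reduces the weighted activation condition to the original one.

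For the forward direction \forward, assume \( \GG(f) \) has a target set of size \( k \). By definition, there is a \( k \)-activating global ordering \( \getorder \) for \( \GG(f) \). Apply Lemma \ref{lemma:fix} to obtain a global ordering \( \getorder' \) that is \( k \)-activating for \( \GG(f) \) and nice to every subexpression of \( f \). Now define \( A \) as the list of tuples of label and threshold of \( \condense(\getorder') \); by construction \( A \) is a complete local ordering of \( \GG(f) \) and \( \getorder' \) extends \( A \). It remains to observe that, since \( \nullf \equiv 0 \), the condition of being \( k \)-activating for state \( (A,\nullf) \) degenerates to
\[
	\incoming{\getorder'}{\GG}(v) \; \geq \; \thr(v) - \nullf(\deact{A}(v)) \; = \; \thr(v)
\]
for every non-target-set vertex \( v \), which is exactly the condition that \( \getorder' \) is \( k \)-activating for \( \GG(f) \). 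Thus \( \getorder' \) witnesses that \( \GG(f) \) is \( k \)-activating for \( (A,\nullf) \).

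For the backward direction \backward, assume that \( \GG(f) \) is \( k \)-activating for some \( (A,\nullf) \) with \( A \) complete. Unfolding the definition, there exists a global ordering \( \getorder \) of \( \GG(f) \) extending \( A \) that is \( k \)-activating for \( (A,\nullf) \). As in the forward direction, because \( \nullf \equiv 0 \) the inequality defining \( k \)-activation for \( (A,\nullf) \) collapses to \( \incoming{\getorder}{\GG}(v) \geq \thr(v) \), which is precisely the requirement that \( \getorder \) is \( k \)-activating for \( \GG(f) \). Hence \( \GG(f) \) admits a target set of size \( k \).

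The only nontrivial step is the forward direction's use of Lemma \ref{lemma:fix}: without knowing that \( \getorder \) can be chosen nice to every subexpression, the resulting local ordering \( A \) need not yield a valid witness for \( \GG(f) \) being \( k \)-activating for \( (A,\nullf) \), since niceness is baked into the definition. Once niceness is secured, both directions are essentially unfolding definitions, and the \( \nullf \) activation from outside makes the two notions of activation numerically identical.
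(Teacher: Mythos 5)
Your proof is correct and follows essentially the same route as the paper: Lemma \ref{lemma:fix} supplies the nice ordering in the forward direction, and the backward direction is a direct unfolding of the definitions using \( \nullf \equiv 0 \). The only cosmetic difference is that you construct the complete local ordering \( A \) directly from \( \condense(\getorder') \), whereas the paper reaches the same \( A \) by completing the empty state via its Lemma \ref{lemma:extend:A}; both are valid.
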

\begin{proof}
	\forward
	Let \( \GG(f) \) have global ordering \( \getorder \) be \( k \)-activating for graph \( \GG(f) \).
	Then, according to Lemma \ref{lemma:fix} there also is a global ordering \( \getorder' \) such that \( \getorder' \) is \( k \)-activating for \( \GG \) and is nice to every subexpression of \( f \).	
	Thus, there is a \( k \)-vertex set \( S \subseteq V(f) \) such that for every vertex \( v \in V(f) \setminus S \) we have 
	\[
		\incoming{\getorder'}{f}(v) \; \geq \; \thr(v) = \thr(v) - \nullf'( \deactt(v) ) ,
	\]
	where \( \nullf': [\ell] \to \{0\} \).
	Therefore, graph \( \GG(f) \) has a global ordering \( \getorder' \) that extends the empty list \( () \), is \( k \)-activating for \( ( (), \nullf') \)
		and nice to every subexpression of \( f \).
	As seen in Lemma \ref{lemma:extend:A} there is a complete state \( (A,\afo) \) that extends \( ((),\nullf') \) such that \( A \) is nice to \( f \) and \( \GG(f) \) is \( k \)-activating for \( (A,\afo) \).
	Extending \( \nullf' \) results in a an activation from outside \( \nullf: [|A|] \cup \labset \to \{0\} \).
	Thus, we have \( \afo = \nullf \) and that \( \GG(f) \) is \( k \)-activating for state \( (A,\nullf) \).
	
	\backward
	Let \( \GG(f) \) be \( k \)-activating for state \( (A,\nullf) \),
		which means that there is a global ordering \( \getorder \) that is \( k \)-activating for \( (A,\nullf) \).
	Then there is a \( k \)-vertex set \( S \) such that for every vertex \( v \in V(f) \setminus S \) we have that
	\(
		\incoming{\getorder}{f}(v)
		 \geq  \thr(v) - \nullf( \deactt(v) ) \; = \; \thr(v) .
	\)
	Therefore, \( \getorder \) is \( k \)-activating for \( \GG(f) \).
\end{proof}

\label{appendix:lemma:split}
\begin{lemma}[Lemma \ref{lemma:split} restated]
	\lemmaSplit
\end{lemma}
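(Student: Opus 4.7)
The plan is to prove both directions of the biconditional separately, exploiting that $\oplus$ adds no edges: for any global ordering $\getorder$ of $\GG(f_1 \oplus f_2)$ and any vertex $v \in V(f_i)$, every neighbor of $v$ lies in $V(f_i)$, so $\incoming{\getorder}{f_1 \oplus f_2}(v) = \incoming{\getorder \upharpoonright_{V(f_i)}}{f_i}(v)$. Moreover, $f_1 \oplus f_2$ itself imposes no niceness condition (only $\addd$-subexpressions do), and every other subexpression of $f_1 \oplus f_2$ is nested inside exactly one of $f_1$ or $f_2$, so niceness decouples cleanly along the partition.

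\forward For the forward direction, I would start from a witness $\getorder$ that extends $A$, is $k$-activating for $(A, \afo)$, and is nice to every subexpression of $f_1 \oplus f_2$. Define $\getorder_i := \getorder \upharpoonright_{V(f_i)}$ (renumbered to $[|V(f_i)|]$). Processing $j = 1, \ldots, |A|$ in order, each vertex $\condense(\getorder, A)[j]$ lies in a unique $V(f_{i_j})$; appending $A[j]$ to $A_{i_j}^\inc$ and setting the new entry of $\afo_{i_j}^\inc$ to $\afo(j)$ mirrors the $\setjoin$ construction exactly, and this is always feasible because at most $\tamount{\alpha}(f_i)$ many $\alpha$-tuples land in $A_i^\inc$. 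Then complete $A_i^\inc$ to a full local ordering $A_i$ whose label-threshold list matches $\condense(\getorder_i)$ (invoking Lemma \ref{lemma:extend:A} if helpful) and fill in $\afo_i$ via the $\setjoin$ completion rule. Setting $k_i := |S \cap V(f_i)|$ gives $k_1 + k_2 = k$; the decoupled incoming counts, combined with the identity $\afo_i(\deacttt{A_i}{\getorder_i}(v)) = \afo(\deacttt{A}{\getorder}(v))$ coming from the $\setjoin$ transfer, show that $\getorder_i$ is $k_i$-activating for $(A_i, \afo_i)$, and niceness of $\getorder_i$ on subexpressions of $f_i$ is inherited from $\getorder$.

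\backward For the backward direction, I would reverse the construction by merging the two orderings. Given witnesses $\getorder_1, \getorder_2$ and a pair $((A_1, \afo_1), (A_2, \afo_2)) \in \setjoin$ with the recorded assignment $j \mapsto i_j$, place at position $j$ of the merged $\getorder$ the vertex at position $j'$ of $\condense(\getorder_{i_j}, A_{i_j}^\inc)$, where $j'$ is the index of $A[j]$ inside $A_{i_j}^\inc$; then interleave the remaining vertices of $\getorder_1$ and $\getorder_2$ after position $|A|$ in any way that preserves each $\getorder_i$. With $S := S_1 \cup S_2$, the merged $\getorder$ extends $A$ because the first $\tamount{\alpha}(f_1 \oplus f_2)$ vertices of label $\alpha$ under $\getorder$ are exactly those placed at $A$'s $\alpha$-slots; it is $k$-activating for $(A, \afo)$ by the same decoupling together with the $\setjoin$ transfer of $\afo$-values; and it is nice to every subexpression of $f_1 \oplus f_2$, since any nested $\addd$-subexpression lies entirely within some $f_i$, where niceness depends only on $\getorder_i$.

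The main obstacle I expect is the bookkeeping that ties $\deacttt{A}{\getorder}(v)$ to $\deacttt{A_i}{\getorder_i}(v)$ through the $\setjoin$ transfer rule. Concretely, for a designated vertex $v \in V(f_i)$ occupying position $j$ of $\condense(\getorder, A)$, one must verify that its position inside the completed $A_i$ equals the position $j'$ that the $\setjoin$ construction recorded inside $A_i^\inc$, so that the completion rule yields $\afo_i(\deacttt{A_i}{\getorder_i}(v)) = \afo_i^\inc(j') = \afo(j)$. This reduces to the observation that the $\setjoin$ construction assigns $A[j]$ to $A_i^\inc$ precisely while the $\alpha$-count of $A_i^\inc$ (with $\alpha := \lab(A[j])$) remains below $\tamount{\alpha}(f_i)$; consequently the first $n_\alpha^i$ vertices of label $\alpha$ in $\getorder_i$ are exactly the $V(f_i)$-vertices appearing in $A$'s $\alpha$-slots, and the extra entries added when completing $A_i^\inc$ to $A_i$ correspond to later $\alpha$-vertices of $\getorder_i$, so $A_i^\inc$ sits as a prefix-like substructure of $A_i$ and the positional $\afo$-transfer is consistent.
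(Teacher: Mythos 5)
Your overall route is the same as the paper's: observe that \( \oplus \) adds no edges so incoming neighbour counts decouple, split the witness ordering in the forward direction and merge two witness orderings in the backward direction, track the \( \afo \)-transfer through the \( \setjoin \) bookkeeping, and complete incomplete states via Lemma~\ref{lemma:extend:A}. The forward direction and the \( \deactt \)-transfer discussion are sound and match the paper's observations (1) and (2).

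The gap is in the merge of your backward direction. You place the \( |A| \) designated vertices at positions \( 1,\dots,|A| \) of the merged ordering and push every non-designated vertex after position \( |A| \). This is in general incompatible with your own requirement that the merge preserve each \( \getorder_i \): a designated vertex \( u \in \condense(\getorder_i, A_i^\inc) \), say the first vertex of some label \( \beta \) under \( \getorder_i \), may be preceded in \( \getorder_i \) by many non-designated vertices of another label \( \alpha \) (the \( (\tamount{\alpha}\!+\!2) \)-nd, \( (\tamount{\alpha}\!+\!3) \)-rd, \dots\ vertices of label \( \alpha \)). Your merge moves all of these behind \( u \), so the relative order of \( \getorder_i \) is not preserved, the identity \( \incoming{\getorder}{f_1\oplus f_2}(u) = \incoming{\getorder_i}{f_i}(u) \) on which your activation argument rests fails, and \( u \) may lose exactly the predecessors it needs. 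The paper avoids this by interleaving \emph{blocks} rather than single vertices: for \( j = 1,\dots,|A| \) it appends all not-yet-placed vertices of \( \getorder_{i(j)} \) up to and including the \( j \)-th designated vertex, then appends the remaining tails. This keeps each \( \getorder_i \) intact while still making \( \condense(\getorder) \) list the designated vertices in the order of \( A \) --- note that extending \( A \) only constrains, per label \( \alpha \), which vertices are the first \( \tamount{\alpha} \) of that label and their mutual interleaving, not that they occupy the first \( |A| \) positions overall. With that corrected merge the rest of your argument goes through as in the paper.
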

\begin{proof}
	We begin with two observations (1) and (2), which we use to in both directions of the proof \forward and \backward.
	
	(1) Assume that \( \GG(f_1 \oplus f_2) \) has global ordering \( \getorder \), and for \( i \in \{1,2\} \) graph \( \GG(f_i) \) has as global ordering \( \getorder_i \) where the relative ordering of vertices \( V(f_i) \) is equal.
	Then, we have that \( \incoming{\getorder_i}{(f_i)}(v) = \incoming{\getorder}{f_1 \oplus f_2}(v) \)
		because there are no edges between \( V(f_1) \) and \( V(f_2) \) in \( \GG(f_1 \oplus f_2) \), and \( \getorder \) preserves the relative ordering of the vertices of \( \getorder_i \).
	
	(2) Assume that \( \GG(f_1 \oplus f_2) \) has global ordering \( \getorder \), and for \( i \in \{1,2\} \) graph \( \GG(f_i) \) has global ordering \( \getorder_i \) and local ordering \( A^\inc_i \) such that for every vertex \( v \in V(f_i) \) we have that \( v \in \condense(\getorder) \) if and only if \( v \in \condense(\getorder,A^\inc_i) \).
		We show that, for every vertex \( v \in V(f_i) \), we have that \( \afo(\deactt(v)) = \afo_i^\inc(\deacttt{A^\inc_i}{\getorder_i}(v)) \).
	Consider the case that \( v \in \condense(A) \).
	Let \( \condense(\getorder) = a_1,a_2,\dots \) and \( \condense(\getorder) = a^i_1, a^i_2, \dots \).
	Then, there is a position \( j \in [|A|] \) such that \( v = a_j = a^i_{j'} \).
	It follows the desired equality \( \afo(\deactt(v)) = \afo(a_j) = \afo^\inc_i(a^i_{j'}) = \afo_i^\inc(\deacttt{A^\inc_i}{\getorder_i}(v)) \).
	Otherwise, it is the case that \( v \notin \condense(\getorder) \).
	Then, also \( v \notin \condense(\getorder,A^\inc_i) \).
	Thus, there is a label \( \alpha \in \labset(f_i) \) such that we have \( \afo(\deactt(v)) = \afo(\alpha) = \afo^\inc(\alpha) = \afo_i^\inc(\deacttt{A^\inc_i}{\getorder_i}(v)) \).
	
	\forward
	Let \( \GG(f_1 \oplus f_2) \) be \( k \) activating for state \( (A,\afo) \).
	Then, there is a global activating ordering \( \getorder \) that extends \( A \), is \( k \)-activating for \( (A,\afo) \) and is nice to every subexpression of \( (f_1 \oplus f_2) \).
	Let \( S \subseteq V(f_1 \oplus f_2) \) be the \( k \)-vertex set such that for every other vertex \( v \in V(f_1 \oplus f_2) \setminus S \) we have that \( \incoming{\getorder}{f_1 \oplus f_2}(v) \geq \thr(v) - \afo(\deactt(v)) \).
	For \( i \in \{1,2\} \), let \( k_i \) be equal to \( |S \cap V(\GG_i)| \), which implies that \( k_1 + k_2 = k \).
	Let \( \condense( \getorder ) = a_1, \dots, a_s \).
	For \( j \in [|A|] \), put the tuple \( A[j] \) into that list \( A_i^\inc \) where its according vertex \( a_j \in V(f_i) \) is from.
	For \( i \in \{1,2\} \), we show the that \( \GG(f_i) \) is \( k_i \)-activating for state \( (A_i^\inc, \afo_i^\inc) \) in the following.
	Let global ordering \( \getorder_i: V(f_i) \to [|V(f_i)|] \) map every \( v \in V(f_i) \) according to its position in \( \getorder \), which is \( \getorder_i(v) := |\set{ u \in V(f_i) }{ \getorder(u) \leq \getorder(v) }| \).
	
	We show that global ordering \( \getorder_i \) extends \( A^\inc_i \),
		which means that the list of tuples of label and threshold of \( \condense(\getorder_i) \) is equal to \( A \).
	We have that \( A^\inc_i = \set{ A[z] }{ z \in [s], \; a_z \in V(f_i) } \) as the above definition implies.
	Then, list \( \condense( \getorder_i, A^\inc_i ) \) is the list of vertices of, for every label \( \alpha \in \labset(f_i) \), the first \( \tmaxx{\alpha}(f_1 \oplus f_2) \) vertices of label \( \alpha \) that are in \( \GG(f_i) \).
	Therefore, list \( \condense( \getorder_i, A^\inc_i ) \) is the list \( \condense( \getorder ) \) restricted to vertices \( V(f_i) \), that is \( \condense( \getorder_i, A^\inc_i )\allowbreak =\allowbreak \set{ a_z }{ z \in [s], a_z \in V(f_i) } \).
	Then, it follows that the list of tuples of label and threshold of \( \condense( \getorder_i, A^\inc_i ) \) is equal to \( A^\inc_i \).
	Thus, \( \getorder_i \) extends \( A^\inc_i \).
	Moreover, for every vertex \( v \in V(f_i) \) we have that \( v \in \condense(\getorder) \) if and only if \( v \in \condense(\getorder,A^\inc_i) \).
	Hence, for every vertex \( v \in V(f_i) \), we have that \( \afo(\deactt(v)) = \afo_i^\inc(\deacttt{A^\inc_i}{\getorder_i}(v)) \) as seen in observation (2).
	
	We have that \( \getorder_i \) preserves the relative ordering of vertices of \( \getorder \) that are in subgraph \( \GG(f_i) \).
	Since \( \getorder \) is nice to every subexpression of \( (f_1 \oplus f_2) \), also \( \getorder_i \) is nice to every subexpression of \( f_i \).
	
	It remains to show that \( \getorder_i \) is \( k_i \)-activating for state \( (A^\inc_i,\afo^\inc_i) \).
	Let \( S \subseteq V(f_1 \oplus f_2) \) be the \( k \)-vertex set such that for every other vertex \( v \in V(f_1 \oplus f_2) \setminus S \) we have that \( \incoming{\getorder}{f_1 \oplus f_2}(v) \geq \thr(v) - \afo(\deactt(v)) \).
	Then, for every vertex \( v \in V(f_i) \setminus S \) it follows that 
	\[
		 \incoming{\getorder_i}{(f_i)}(v) \; 
			 \overset{(1)}{=} \; \incoming{\getorder}{f_1 \oplus f_2}(v) 
			 \geq \; \thr(v) - \afo(\deactt(v)) 
			 \overset{(2)}{=} \; \thr(v) - \afo_i^\inc(\deacttt{A^\inc_i}{\getorder_i}(v))
			 .
	\]
	Thus, graph \( \GG(f_i) \) has global ordering \( \getorder_i \) that extends \( A_i^\inc \), is \( k_i \)-activating for \( (A^\inc_i, \afo^\inc_i) \) and nice to every subexpression of \( f_i \).
	Therefore, graph \(  \GG(f_i) \) is \( k_i \)-activating for  \( (A^\inc_i, \afo^\inc_i) \). 
	Finally, extend the possibly incomplete state \( (A_i^\inc, \afo_i^\inc) \) to a complete state \( (A_i, \afo_i) \) such that \( \GG(f_i) \) is \( k_i \)-activating for \( (A_i, \afo_i) \) as seen in Lemma \ref{lemma:extend:A}.
	
	\backward
	Let \( \big( (A_1,\afo_1), (A_2, \afo_2) \big) \in \setjoin \) and \( k_1,k_2 \in \N \) be such that, for \( i \in \{1,2\} \), graph \( \GG(f) \) is \( k_i \)-activating for state \( (A_i,\afo_i) \).
	That means, \( \GG(f_i) \) has global ordering \( \getorder_i \) that extends \( A_i \), is \( k \)-activating for \( (A_i,\afo_i) \) and is nice for every subexpression of \( f_i \).
	Let \( (A_i, \afo_i) \) extend \( (A_i^\inc,\afo_i^\inc) \) as seen in the construction.
	Then, specially \( \getorder_i \) extends \( A_i^\inc \), is \( k \)-activating for \( (A_i^\inc,\afo_i^\inc) \) and is nice for every subexpression of \( f_i \).
	
	We define the global ordering \( \getorder \).
	Following this we show that \( \getorder \) extends \( A \), is \( k \)-activating for \( (A,\afo) \) and nice to every subexpression of \( (f_1 \oplus f_2) \).
	Let mapping \( i: [|A|] \to \{1,2\} \) be such that, for all positions \( j \in [|A|] \), we have that it maps to the according subgraph, which is \( A[j] = A_{i(j)}[j'] \).
	For \( i \in \{1,2\} \), let \( \condense( A^\inc_i, \afo ) = a^i_1, \dots, a^i_{s^i} \).
	Then, for \( i \in \{1,2\} \) and \( j \in [|A_i|] \), let \( \nexte{i}(j) \) be the list of positions ranging from after the position of \( a_{j-1} \) to the position of \( a_j \) (or from position 1 if \( a_{j-1} \) does not exist).
	That is \( \nexte{i}(1) :=  \getorder(1), \dots,  \getorder(a_j^i) \) and \( \nexte{i}(j) :=  \getorder(a_{j-1}^i) + 1, \dots,  \getorder(a_j^i) \) for \( j \in \{2,\dots,|A|\} \).
	Let \( \getorder \) be such that its ordered vertices are \( \nexte{i(1)}( 1' ) , \dots, \nexte{i(|A|)}( |A|' ) \), which is the ordering that always adds the vertices not yet added up to the position of \( a_j \) from the ordering of the subgraph of \( a_j \).
	
	We show that \( A \) extends \( \getorder \), which means that \( A \) has tuples of label and threshold equal to \( \condense(\getorder) \).
	We have that \( A = A^\inc_{i(1)}[1'], \dots, A^\inc_{i(s)}[s'] \).
	For every label \( \alpha \in \labset(\GG) \), the number of tuples of label \( \alpha \) of \( A^\inc_1 \) and \( A^\inc_2 \) add up to the number of \( A \), in other words \( |\set{ z }{ \lab( A^\inc_1[z] ) = \alpha }| +  |\set{ z }{ \lab( A^\inc_2[z] ) = \alpha }| = \tmaxx{\alpha}(f_1 \oplus f_2) \).
	Thus, the list \( \condense(\getorder) = a^{i(1)}_{i'}, \dots, a^{i(|A|)}_{|A|'} \) has tuples of label and threshold equal to \( A = A^\inc_{i(1)}[1'], \dots, A^\inc_{i(s)}[s'] \).
	Moreover, for every vertex \( v \in V(f_i) \) we have that \( v \in \condense(\getorder) \) if and only if \( v \in \condense(\getorder,A^\inc_i) \).
		Hence, for every vertex \( v \in V(f_i) \), we have that \( \afo(\deactt(v)) = \afo_i^\inc(\deacttt{A^\inc_i}{\getorder_i}(v)) \) as seen in observation (2).
	
	We have that \( \getorder \) preserves the relative ordering of \( \getorder_1 \) and \( \getorder_2 \).
	Since, for \( i \in \{1,2\} \), global ordering \( \getorder_i \) is nice to every subexpression of \( f_i \) and \( \getorder \) is trivially nice to \( (f_1 \oplus f_2) \) itself,
		it follows that \( \getorder \) is nice to every subexpression of  \( (f_1 \oplus f_2) \).
	
	It remains to show that \( \getorder \) is \( k \)-activating for state \( (A,\afo) \).
	For \( i \in \{1,2\} \), let \( S_i \subseteq V(f_i) \) be the \( k_i \)-vertex set such that for every vertex \( v \in V(f_i) \setminus S_i \) we have that \( \incoming{\getorder_i}{(f_i)}(v) \geq \thr(v) - \afo_i^\inc(\deacttt{A^\inc_i}{\getorder_i}(v)) \).
	Then, for every vertex \( v \in V(f_1 \oplus f_2) \setminus (S_1 \cup S_2) \), there is an \( i \in \{1,2\} \) such that
	\[
		 \incoming{\getorder}{f_1 \oplus f_2}(v) \;	 		 
			\overset{(1)}{=} \; \incoming{\getorder_i}{(f_i)}(v) 
			\geq \; \thr(v) - \afo_i^\inc(\deacttt{A^\inc_i}{\getorder_i}(v))
			\overset{(2)}{=} \; \thr(v) - \afo(\deactt(v))
			.
	\]
	Thus, graph \( \GG(f) \) has global ordering \( \getorder \) that extends \( A \), is \( k \)-activating for \( (A, \afo) \) and nice to every subexpression of \( (f_1 \oplus f_2) \).
	Therefore, graph \( \GG(f_i) \) is \( k \)-activating for \( (A, \afo) \).
\end{proof}

\label{appendix:lemma:relabel}

\begin{lemma}[Lemma \ref{lemma:relabel} restated]
	\lemmaRelabel
\end{lemma}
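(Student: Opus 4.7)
The key observation is that $\GG(\renamee f')$ and $\GG(f')$ have identical vertex sets, edges, and thresholds, so a single global ordering $\getorder$ serves both; only the labelings differ, with $V_\alpha(f')$ becoming part of $V_\beta(\renamee f')$. Because $\renamee$ adds no edges, any $\getorder$ nice to every subexpression of $f'$ is automatically nice to every subexpression of $\renamee f'$, and vice versa. So both directions reduce to transporting the local ordering and the activation from outside across the relabeling.

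\forward Let $\getorder$ extend $A$, be $k$-activating for $(A,\afo)$ on $\GG(\renamee f')$, and be nice to every subexpression of $\renamee f'$. I would construct the incomplete state $(A^\inc,\afo^\inc)$ position by position: scan $A$, and for every position $j$ with $\lab(A[j])=\beta$ whose corresponding vertex $v = \condense(\getorder,A)[j]$ satisfies $\lab_{f'}(v)=\alpha$, relabel that tuple to $\alpha$; leave all other tuples unchanged. Set $\afo^\inc$ equal to $\afo$ on $[|A|]$ and on every label $\gamma \neq \alpha$, and put $\afo^\inc(\alpha):=\afo(\beta)$. Letting $s$ be the number of relabeled tuples, the bound $s \leq \tmaxx{\alpha}(f')$ follows from $s \leq |V_\alpha(f')|$ and $s \leq \tmax+1$, and the remaining $\beta$-count $\tmaxx{\beta}(\renamee f') - s \leq \tmaxx{\beta}(f')$ follows analogously; hence $(A^\inc,\afo^\inc)$ is producible by the construction in the lemma's setup. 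Finally, complete $(A^\inc,\afo^\inc)$ to some $(A',\afo')$ via Lemma \ref{lemma:extend:A}; the same $\getorder$ witnesses that $\GG(f')$ is $k$-activating for $(A',\afo')$, once we verify that $\afo(\deactt(v)) = \afo'(\deacttt{A'}{\getorder}(v))$ for every $v \in V(f')$ by case analysis on whether $v$ lies in $\condense(\getorder,A)$, in the added part of $\condense(\getorder,A')$, or in neither.

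\backward Given $(A',\afo') \in \setlabel$ completing some $(A^\inc,\afo^\inc)$ produced from $(A,\afo)$ as above, take a global ordering $\getorder$ witnessing that $\GG(f')$ is $k$-activating for $(A',\afo')$ and nice to every subexpression of $f'$. Then $\getorder$ extends $A$ on $\GG(\renamee f')$ because relabeling $\alpha$-tuples of $A^\inc$ to $\beta$ exactly mirrors how the $\alpha$-labeled vertices in the first $\tamount{}$-vertices of $f'$ appear among the $\beta$-labeled vertices in $\renamee f'$, and the threshold components of tuples are preserved throughout. The same vertex-by-vertex case analysis shows the activation condition is preserved, so $\GG(\renamee f')$ is $k$-activating for $(A,\afo)$.

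The main obstacle is the bookkeeping that matches up the $\afo$-values at corresponding vertices across the label change. Specifically, one must verify that vertices treated as ``late $\alpha$'' by $(A',\afo')$ (receiving $\afo'(\alpha)=\afo(\beta)$) are precisely those treated as ``late $\beta$'' by $(A,\afo)$, and that the vertices referenced by positions of $A$ or $A'$ correspond bijectively under the relabeling. Once these identifications are made explicit, the activation inequality $\incoming{\getorder}{\GG}(v) \geq \thr(v) - \afo(\deactt(v))$ transfers verbatim between the two states in both directions.
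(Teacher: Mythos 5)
Your proof follows essentially the same route as the paper's: the same two observations (relabeling adds no edges, so \( \incoming{\getorder}{f'}(v) = \incoming{\getorder}{\renamee f'}(v) \) and niceness transfers; and the \( \afo \)-values at corresponding positions and labels agree by the stated case analysis), the same construction of the incomplete state by relabeling exactly those tuples of \( A \) whose underlying vertices carry label \( \alpha \) in \( f' \), and the same completion step via Lemma \ref{lemma:extend:A}. One small caution: your derivation of \( s \leq \tmaxx{\alpha}(f') \) from \( s \leq \tmax+1 \) does not quite go through, since \( \tmaxx{\alpha}(f') \) is capped at \( \tmax \) while the relabeled count can in principle reach \( \tmax+1 \) -- but this boundary issue lies in the paper's definition of \( \setlabel \) and is glossed over by the paper's own proof as well, so it does not reflect a gap in your approach.
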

\begin{proof}
	We begin with two observations (1) and (2), which we use to in both directions of the proof \forward and \backward.
	
	(1)
	Let \( \getorder \) be a global ordering of graph \( \GG(f) \) or of graph \( \GG(\renamee f) \).
	The number of predecessors of a vertex defined by \( \getorder \) does not depend on the labels.
	Thus, for every vertex \( v \in V(f') \) and for every vertex \( v \in V(\renamee f') \), we have that \( \incoming{\getorder}{f'}(v) = \incoming{\getorder}{\renamee f'}(v) \).
	
	(2)
	Let state \( (A,\afo) \) extend the possible incomplete state \( (A^\inc,\afo^\inc) \) as seen in the construction.
	Further, let \( \getorder \) be a global ordering.
	We show that, for every vertex \( v \in V(f') \), we have that \( \afo( \deactt(v) ) =\afo^\inc( \deact{A^\inc}(v)) \).
	If \( \deact{A^\inc}(v) = \alpha \), it follows that \( \afo( \deactt(v) ) = \afo( \beta ) = \afo^\inc( \alpha ) = \afo^\inc( \deact{A^\inc}(v)) \),
			and otherwise if \( \deact{A^\inc}(v) \neq \alpha \), there is an \( x \in [|A|] \cup \labset \setminus \{\alpha\} \) such that \( \afo( \deactt(v) ) = \afo( x ) = \afo^\inc( x ) =  \deact{A^\inc}(v) \).
	
	\forward
	Let \( \GG( \renamee f' ) \) be \( k \)-activating.
	Then, there is a global ordering \( \getorder \) that extends \( A \), is \( k \)-activating for \( (A,\afo) \) and is nice to every subexpression of \( \addd f' \).	
	Since \( \getorder \) is nice to every subexpression of \( \addd f' \), it especially is nice to every subexpression of \( f' \).
	Let \( \condense(\getorder) = a_1, \dots, a_{|A|} \).
	Let \( i_1 < \dots < i_s \) be the positions of \( A \) where \( a_{i_1},\dots,a_{i_s} \) have label \( \alpha \) in \( \GG(f') \).
	In particular, the positions \( a_{i_1},\dots,a_{i_s} \) have label \( \beta \) in \( G(\renamee f') \).
	Let \( A^\inc \) be the local ordering of \( \GG(f') \) equal to \( A \) but where the tuples \( A[i_1], \dots, A[i_s] \) are re-labeled from \( \beta \) to \( \alpha \).
	
	We show that \( \getorder \) extends \( A^\inc \) for graph \( \GG(f') \), which means that the tuples of label and threshold of \( \condense(\getorder) \) are equal to \( A^\inc \).
	Since \( \getorder \) already extends \( A \) for graph \( \GG(\renamee f') \), it suffices to show that the re-labeleing of vertices \( a_1, \dots, a_{|A|} \) matches the re-labeling of \( A \).
	The re-labeled vertices of \( A \) are \( a_{i_1}, \dots, a_{i_s} \), and they have the positions \( i_1, \dots, i_s \) where \( A \) is re-labeled.
	Thus, it follows that \( \getorder \) also extends \( A^\inc \) for graph \( \GG(f') \).
	
	It remains to show that \( \getorder \) is activating for state \( (A^\inc, \afo^\inc) \).
	Let \( S \subseteq V(\renamee f) \) be the \( k \)-vertex set such that for every other vertex \( v \in V(\renamee f) \setminus S \) we have that \( \incoming{\getorder}{\renamee f'}(v) \geq \thr(v) - \afo( \deactt(v) ) \).
	Then, for every vertex \( v \in V(f') \setminus S \), it follows that
	\[
		\incoming{\getorder}{f'}(v) \;
		\overset{(1)}{=} \; \incoming{\getorder}{\renamee f'}(v)
		\geq \; \thr(v) - \afo( \deactt(v) )
		\overset{(2)}{=} \; \thr(v) - \afo^\inc( \deact{A^\inc}(v) )
		.
	\]
	Thus, \( \getorder \) is a global ordering that extends \( A^\inc \), is \( k \)-activating for state \( (A^\inc, \afo^\inc) \) and nice to every subexpression of \( f' \).
	Therefore, graph \( \GG(f') \) is \( k \)-activating for graph \( (A^\inc,\afo^\inc) \).
	Finally, extend the possibly incomplete state \( (A^\inc, \afo^\inc) \) to a complete \( (A, \afo) \) such that \( \GG(f') \) is \( k \)-activating for \( (A, \afo) \) as seen in Lemma \ref{lemma:extend:A}.
		
	\backward
	Let state \( (A', \afo') \in \setlabel \) be such that graph \( \GG(f') \) is \( k \)-activating for \( (A', \afo') \).
	Let \( (A^\inc, \afo^\inc) \) be the possibly incomplete state from the construction of \( (A',\afo') \).
	Then, especially graph \( \GG(f') \) has global ordering \( \getorder \) that extends \( A^\inc \) and is \( k \)-activating for \( (A^\inc, \afo^\inc) \).
	Thus, \( \GG( f' ) \) has global ordering \( \getorder \) that extends \( A^\inc \), is \( k \)-activating for \( (A^\inc, \afo^\inc) \) and is nice to every subexpression of \( f' \).
	Let \( s \in [0,\tmax] \) be the number of tuples that have label \( \alpha \) in \( A^\inc \).
	Let \( i_1 <\dots < i_s \) be the positions in \( \getorder \) of vertices \( \condense(\getorder) \) that have label \( \alpha \) in \( A^\inc \). 
	
	We show that \( \getorder \) extends \( A \) for graph \( \GG( \renamee f') \), which means that the tuples of label and threshold of \( \condense(\getorder) \) are equal to \( A \).
	Since \( \getorder \) already extends \( A^\inc \) for graph \( \GG( f' ) \), it suffices to show that the re-labeling of vertices of vertices \( a_1, \dots, a_{|A|} \) matches the re-labeling of \( A^\inc \).
	The re-labeled vertices of \( A^\inc \) are \( a_{i_1}, \dots, a_{i_s} \), and they exactly have the positions \( i_1, \dots, i_s \) where \( A^\inc \) is re-labeled.
		Thus, it follows that \( \getorder \) also extends \( A \) for graph \( \GG(f') \).
	
	It remains to show that \( \getorder \) is  \( k \)-activating for state \( (A,\afo) \).
	Let \( S \) be the \( k \)-vertex set such that for every other vertex vertex \( v \in V(f') \setminus S \) we have that \( \incoming{\getorder}{f'}(v) \geq \thr(v) - \afo^\inc( \deact{A^\inc}(v) ) \).
	Then, for every vertex \( v \in V(\renamee f') \setminus S \), it follows that
	\[
		\incoming{\getorder}{\renamee f'}(v) \; \overset{(1)}{=} \; \incoming{\getorder}{f'}(v) \; \geq \; \thr(v) - \afo^\inc( \deact{A^\inc}(v) ) \; \overset{(2)}{=} \; \thr(v) - \afo( \deactt(v) )
		,
	\]
	and thus \( \getorder \) is a global ordering that extends \( A \), is \( k \)-activating for state \( (A, \afo) \) and nice to every subexpression of \( f' \).
	Therefore, graph \( \GG(f') \) is \( k \)-activating for \( (A,\afo) \).
\end{proof}

}{}

\end{document}